\newcommand{\sv}[1]{}%
 \newcommand{\lv}[1]{#1}%
\newcommand{\appendixText}{}
 \newcommand{\toappendix}[1]{#1}%
\newcommand{\Oh}{\mathcal{O}}
\newcommand{\eps}{\epsilon}
\newcommand{\NP}{\textsf{NP}\xspace}
\newcommand{\APX}{\textsf{APX}\xspace}
\newcommand{\Z}{\mathbb{Z}_{\geq 0}}
\newcommand{\cA}{\mathcal{A}}
\newcommand{\cT}{\mathcal{T}}
\renewcommand{\leq}{\leqslant}
\renewcommand{\geq}{\geqslant}
\renewcommand{\epsilon}{\varepsilon}
\renewcommand{\setminus}{-}
\newcommand{\wei}{\mathfrak{w}}
\newcommand{\exponentqptaslogn}{5}
\newcommand{\exponentqptaseps}{-1}
\title{Max Weight Independent Set in graphs with no long claws: An analog of the Gy\'{a}rf\'{a}s' path argument}
\titlerunning{Max Independent Set in graphs with no long claws}
\author{Konrad Majewski}{Institute of Informatics, Faculty of Mathematics, Informatics and Mechanics, University of Warsaw}{}{}{}
\author{Tom\'a\v{s} Masa\v{r}\'ik}{Institute of Informatics, Faculty of Mathematics, Informatics and Mechanics, University of Warsaw}{masarik@mimuw.edu.pl}{0000-0001-8524-4036}{}
\author{Jana Novotn\'a}{Institute of Informatics, Faculty of Mathematics, Informatics and Mechanics, University of Warsaw}{jnovotna@mimuw.edu.pl}{0000-0002-7955-4692}{}
\author{Karolina Okrasa}{ Faculty of Mathematics and Information Science, Warsaw University of Technology \and Institute of Informatics, Faculty of Mathematics, Informatics and Mechanics, University of Warsaw}{k.okrasa@mini.pw.edu.pl}{0000-0003-1414-3507}{}
\author{Marcin Pilipczuk}{Institute of Informatics, Faculty of Mathematics, Informatics and Mechanics, University of Warsaw}{m.pilipczuk@mimuw.edu.pl}{0000-0001-5680-7397}{}
\author{Paweł Rzążewski}{Faculty of Mathematics and Information Science, Warsaw University of Technology \and Institute of Informatics, Faculty of Mathematics, Informatics and Mechanics, University of Warsaw}{p.rzazewski@mini.pw.edu.pl}{0000-0001-7696-3848}{Partially supported by Polish National Science Centre grant no. 2018/31/D/ST6/00062.}
\author{Marek Sokołowski}{Institute of Informatics, Faculty of Mathematics, Informatics and Mechanics, University of Warsaw}{marek.sokolowski@mimuw.edu.pl}{}{}
\authorrunning{Majewski, Masa\v{r}\'ik, Novotn\'{a}, Okrasa,  Pilipczuk,  Rzążewski, and  Sokołowski} 
\author{anonym}{anonym}{}{}{}
\keywords{Max Independent Set, subdivided claw, QPTAS, subexponential-time algorithm}
\begin{document}

\maketitle

\begin{abstract}
We revisit recent developments for the \textsc{Maximum Weight Independent Set} problem
in graphs excluding a subdivided claw $S_{t,t,t}$ as an induced subgraph
[Chudnovsky, Pilipczuk, Pilipczuk, Thomass\'{e}, SODA 2020] 
and provide a subexponential-time algorithm with improved running time $2^{\Oh(\sqrt{n}\log n)}$
and a quasipolynomial-time approximation scheme with improved
running time $2^{\Oh(\varepsilon^{\exponentqptaseps} \log^{\exponentqptaslogn} n)}$.

The Gy\'arf\'as' path argument,
a powerful tool that is the main building block for many algorithms in $P_t$-free graphs,
ensures that given an $n$-vertex $P_t$-free graph, in polynomial time we can find a set $P$ of at most $t-1$ vertices,
such that every connected component of $G-N[P]$ has at most $n/2$ vertices.
Our main technical contribution is an analog of this result for $S_{t,t,t}$-free graphs:
given an $n$-vertex $S_{t,t,t}$-free graph, in polynomial time we can find a set $P$ of $\Oh(t \log n)$ vertices
and an extended strip decomposition (an appropriate analog of the decomposition into connected components)
of $G-N[P]$ such that every particle (an appropriate analog of a connected component to recurse on) 
of the said extended strip decomposition has at most $n/2$ vertices.

\end{abstract}

\lv{ \newpage}
\section{Introduction}\label{sec:intro}
The complexity of the \textsc{Maximum Weight Independent Set} problem (\textsc{MWIS} for short), one of the classic combinatorial optimization problems,
varies depending on the restrictions imposed on the input graph from polynomial-time solvable (e.g., in bipartite or chordal graphs)
through known to admit a quasipolynomial-time algorithm (graphs with bounded longest induced path~\cite{GartlandL20}),
a polynomial-time approximation scheme and a fixed-parameter algorithm (planar graphs~\cite{Baker94}), 
a quasipolynomial-time approximation scheme (graphs excluding a fixed subdivided claw as an induced subgraph~\cite{DBLP:journals/corr/abs-1907-04585, DBLP:conf/soda/ChudnovskyPPT20}), 
to being \NP-hard and hard to approximate within $n^{1-\varepsilon}$ factor in general graphs~\cite{Hastad99,Zuckerman07}.
A methodological study of this behavior leads to the following question:
\begin{quote}
For which structures in the input graph, the assumption of their absence from the input
graph makes \textsc{MWIS} easier and by how much?
\end{quote}
The ``absence of structures'' notion can be made precise by specifying the forbidden structure
and the containment relation, for example as a minor, topological minor, induced minor, subgraph,
or induced subgraph. The last one --- induced subgraph relation --- is the weakest one, and thus
the most expressible. %
This leads to the study of the complexity of \textsc{MWIS} in various hereditary graph classes,
that is, graph classes closed under vertex deletion and thus definable
by a (possibly infinite) list of forbidden induced subgraphs. 

While a general classification of \emph{all} hereditary graph classes with regards to the complexity of \textsc{MWIS} (or other classic graph problems) may be too complex, 
classifying graph classes with one forbidden induced subgraph looks more feasible.
That is, we focus on $H$-free graphs, graphs excluding a fixed graph $H$ as an induced subgraph.
Furthermore, the complexity of a given problem (here, \textsc{MWIS}) in $H$-free graphs
may indicate the impact of forbidding $H$ as an induced subgraph on the complexity
of \textsc{MWIS} in more general settings.

As observed by Alekseev~\cite{alekseev1982effect,Alekseev03},
the fact that \textsc{MWIS} remains \NP-hard and \APX-hard in subcubic graphs,
together with the observation that subdividing every edge twice in a graph increases
the size of the maximum independent set by exactly the number of edges of the original graph,
leads to the conclusion that \textsc{MWIS} remains \NP-hard and \APX-hard in $H$-free graphs
unless every connected component of $H$ is a path or a tree with three leaves. 

In what follows, for integers $t,a,b,c > 0$, by $P_t$ we denote the path on $t$
vertices, and by $S_{a,b,c}$ we denote the tree with three leaves within distance $a$, $b$, and $c$ from the unique vertex of degree $3$ of the tree. 
Since 1980s, it has been known that \textsc{MWIS} is polynomial-time solvable in $P_4$-free graphs (because of their strong structural properties)
and in $S_{1,1,1}$-free graphs~\cite{MINTY1980284,SBIHI198053} (because the notion of an augmenting path from the matching problem generalizes to \textsc{MWIS} in $S_{1,1,1}$-free, i.e., claw-free graphs). 
For many years, only partial results in subclasses were obtained until the area started to develop rapidly around 2014. 

Lokshtanov, Vatshelle, and Villanger~\cite{LokshtanovVV14} adapted the framework of
potential maximal cliques~\cite{BouchitteT01} to show a polynomial-time algorithm
for \textsc{MWIS} in $P_5$-free graphs; this was later generalized to $P_6$-free graphs~\cite{GrzesikKPP19} and other related graph classes~\cite{AbrishamiCDTTV22,AbrishamiCPRS21}.
More importantly for this work, Bacs\'o et al.~\cite{DBLP:journals/algorithmica/BacsoLMPTL19} observed that the classic Gy\'{a}rf\'{a}s' path argument,
developed to show that for every fixed $t$ the class of $P_t$-free graphs is $\chi$-bounded~\cite{gyarfas1,gyarfas2}, 
also easily gives a subexponential-time algorithm for \textsc{MWIS} in $P_t$-free graphs. 
The crucial corollary of the Gy\'{a}rf\'{a}s' path argument lies in the following.
\begin{theorem}\label{thm:gyarfas}
Given an $n$-vertex graph $G$, one can in polynomial time find an induced path $Q$ in $G$
such that every connected component of $G-N[V(Q)]$ has at most $n/2$ vertices.
\end{theorem}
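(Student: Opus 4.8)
The plan is to run a depth-first search (DFS) from an arbitrary vertex and extract the induced path from the DFS tree in a greedy, weight-balanced way. First I would root a DFS tree $T$ at an arbitrary vertex $r$ of a connected component of $G$; recall the standard property that in a DFS tree every edge of $G$ joins a vertex to one of its ancestors (there are no ``cross edges''). I want to build a root-to-node path $Q = (r = q_0, q_1, \dots, q_k)$ in $T$ by repeatedly descending into the child subtree that currently contains the most vertices (breaking ties arbitrarily), stopping when the current vertex $q_k$ has no child whose subtree contains more than $n/2$ vertices. Since $Q$ is a path in the DFS tree, it is an induced path in $G$: a chord $q_iq_j$ with $i+1<j$ would be a DFS back-edge, but consecutive vertices on a root-to-node path in $T$ are already adjacent and non-consecutive ones along one branch cannot be adjacent without violating the DFS tree structure (any such edge would have been a tree edge). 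Actually, more carefully: on a single root-to-node branch all edges among the $q_i$ that exist in $G$ must be ancestor–descendant edges in $T$, and one checks $Q$ is chordless because $T$ restricted to the branch is already a path.

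Next I would analyze the components of $G - N[V(Q)]$. Consider any connected component $C$ of $G - N[V(Q)]$. Because of the no-cross-edge property of DFS, $C$ lives entirely within a single subtree hanging off of $Q$: concretely, let $q_i$ be the deepest vertex of $Q$ that is an ancestor (in $T$) of some vertex of $C$; then every vertex of $C$ lies in the subtree $T_v$ rooted at some child $v$ of $q_i$ with $v \notin V(Q)$ (the child on the branch $Q$ is excluded because $C$ avoids $N[V(Q)]$, and vertices of $C$ cannot escape $T_v$ since any path leaving $T_v$ in $G$ would have to pass through $q_i \in V(Q) \subseteq N[V(Q)]$, using again that all non-tree edges go to ancestors). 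Hence $|C| \le |V(T_v)|$ for some child $v$ of $q_i$ off the path. If $q_i$ is the last vertex $q_k$ of $Q$, then by our stopping rule every child subtree of $q_k$ has at most $n/2$ vertices, so $|C| \le n/2$. If $q_i$ is an internal vertex $q_j$ with $j < k$, then $v$ is a child of $q_j$ different from $q_{j+1}$; but we chose $q_{j+1}$ to maximize the child-subtree size, and the stopping rule only fires at $q_k$, so $|V(T_{q_{j+1}})| > n/2$, which forces $|V(T_v)| \le n - |V(T_{q_{j+1}})| < n/2$ (the subtrees of distinct children are disjoint). Either way $|C| \le n/2$.

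The one remaining subtlety is connectivity of $G$: if $G$ is disconnected, I would first observe it suffices to handle each connected component separately — apply the above to the largest component $G'$; all other components have at most $n/2$ vertices already, and they are untouched by $N[V(Q)]$ since $Q \subseteq V(G')$, so they remain as components of $G - N[V(Q)]$ of size $\le n/2$. Running time is clearly polynomial: one DFS plus a linear scan down the heaviest branch. The main thing to get right — and the only place where a careless argument can go wrong — is the claim that each component $C$ of $G - N[V(Q)]$ is confined to one off-path subtree; everything hinges on the DFS ``all edges are back-edges'' property, so I would state and use that lemma explicitly. (The weighted version, where we instead want each component to have weight at most half the total, follows by the identical argument with ``number of vertices'' replaced by ``total weight''.)
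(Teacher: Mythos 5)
The paper does not reprove Theorem~\ref{thm:gyarfas}; it is stated as a known fact, due to Gy\'arf\'as and used algorithmically by Bacs\'o et al. The standard proof builds $Q$ incrementally: having an induced path $(v_1,\dots,v_i)$, if every component of $G-N[\{v_1,\dots,v_i\}]$ has at most $n/2$ vertices, stop; otherwise the unique large component $D_i$ satisfies $N_G(D_i)\subseteq N(\{v_1,\dots,v_i\})$ and $N_G(D_i)\not\subseteq N[\{v_1,\dots,v_{i-1}\}]$, so one can pick $v_{i+1}\in N_G(D_i)\cap N(v_i)$ with $v_{i+1}\notin N[\{v_1,\dots,v_{i-1}\}]$; the path stays induced, the large component strictly shrinks, and the process terminates.

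Your DFS-based construction is genuinely different, but it has a gap that I do not see how to repair as written: the root-to-node branch $Q=(q_0,\dots,q_k)$ of a DFS tree need \emph{not} be an induced path in $G$. You argue that a chord ``would have been a tree edge,'' but that is not how DFS behaves: if $q_j$ with $j>i+1$ is a neighbour of $q_i$ in $G$, then DFS discovers $q_j$ while exploring $q_{i+1}$'s subtree, so by the time the edge $q_iq_j$ is examined, $q_j$ is already visited and the edge is classified as a back edge --- exactly the ancestor--descendant non-tree edges you acknowledge exist --- and such a back edge is a chord of $Q$. Concretely, take a triangle on $\{a,b,c\}$ with $m\ge 2$ pendant vertices attached to $c$, and run a DFS from $a$ that visits $b$ before $c$: the DFS branch is $a,b,c$ with back edge $ca$, the subtree sizes at $b$ and $c$ both exceed $n/2$ while each pendant's subtree has size $1$, so your greedy heaviest-subtree descent outputs exactly $Q=(a,b,c)$, which is not induced because of the edge $ac$. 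The second half of your argument --- that each component of $G-N[V(Q)]$ is confined to one off-branch DFS subtree via the no-cross-edge property --- is correct and cleanly isolates the relevant structural claim, but the inducedness of $Q$ is precisely what makes the theorem useful (in a $P_t$-free graph an induced $Q$ has fewer than $t$ vertices, whereas a non-induced DFS branch may be arbitrarily long), so this gap is fatal.
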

For $P_t$-free graphs the said path $Q$ has at most $t-1$ vertices. 
Bacs\'o et al.~\cite{DBLP:journals/algorithmica/BacsoLMPTL19}
observed that branching either on the highest degree vertex
(if this degree is larger than $\sqrt{n}$) or on the whole set $N[V(Q)]$ for the path $Q$ coming from \cref{thm:gyarfas}
(otherwise)
gives an algorithm with running time bound exponential in $\sqrt{n} \cdot \mathrm{poly}(t,\log n)$. 

Chudnovsky, Pilipczuk, Pilipczuk, and Thomass\'{e}~\cite{DBLP:journals/corr/abs-1907-04585, DBLP:conf/soda/ChudnovskyPPT20} added to the mix 
an observation that a simple branching algorithm is able to get rid of \emph{heavy} vertices: vertices of the input graph whose neighborhood contains a large
fraction of the sought independent set. 
Once this branching is executed and the graph does not have heavy vertices, 
the set $N[Q]$ from \cref{thm:gyarfas} contains only a small fraction of the sought solution and, if one aims for an approximation algorithm, can be just sacrificed, yielding 
a quasipolynomial-time approximation scheme (QPTAS) for \textsc{MWIS} in $P_t$-free graphs.
Using this as a starting point and leveraging on the celebrated \emph{three-in-a-tree} theorem of Chudnovsky and Seymour~\cite{DBLP:journals/combinatorica/ChudnovskyS10}, they developed a much more involved QPTAS and a subexponential
algorithm (with running time bound $2^{n^{8/9} \mathrm{poly}(\log n, t)}$) for \textsc{MWIS} in $S_{t,t,t}$-free graphs. 

Consider the following simple template for a branching algorithm for \textsc{MWIS}: 
if the current graph is disconnected, solve independently every connected component; otherwise, 
pick a vertex (\emph{pivot}) $v$ and branch whether $v$ is in the sought independent set (recursing on $G-N[v]$)
or not (recursing on $G-v$). 
The performance of such an algorithm highly depends on how we choose the pivot $v$. 
\cref{thm:gyarfas} suggests that in $P_t$-free graphs the vertices of $Q$ may be good choices:
there is only a bounded number of them, and the deletion of \emph{the whole} neighborhood $N[V(Q)]$ splits $G$
into multiplicatively smaller pieces. 
In a breakthrough result, Gartland and Lokshtanov~\cite{GartlandL20} 
showed how to choose the pivot and measure the progress of the algorithm, obtaining a quasipolynomial-time
algorithm for \textsc{MWIS} in $P_t$-free graphs. 
Later, Pilipczuk, Pilipczuk, and Rz\k{a}\.{z}ewski~\cite{PilipczukPR21} provided an arguably simpler measure, leading
to an improved (but still quasipolynomial) running time bound. 
These developments have been subsequently generalized to a larger class of problems beyond \textsc{MWIS} and to
$C_{>t}$-free graphs (graphs without induced cycle of length more than $t$)~\cite{GartlandLPPR21}.

This progress suggests that \textsc{MWIS} may be actually solvable in polynomial time in $H$-free graphs
for all open cases, that is, whenever $H$ is a forest whose every connected component has at most three leaves. 
However, we seem still far from proving it: not only we do not know how to improve the quasipolynomial bounds
of~\cite{GartlandL20,PilipczukPR21} to polynomial ones, but also it remains unclear how to merge
the approach of~\cite{GartlandL20,PilipczukPR21} with the way how~\cite{DBLP:journals/corr/abs-1907-04585, DBLP:conf/soda/ChudnovskyPPT20}
used the three-in-a-tree theorem~\cite{DBLP:journals/combinatorica/ChudnovskyS10}. 

\medskip

In this work, we make a step in this direction, providing an analog of \cref{thm:gyarfas}
for $S_{t,t,t}$-free graphs. 
Before we state it, let us briefly discuss what we can hope for in the class of $S_{t,t,t}$-free graphs.

Consider an example of a graph $G$ being the line graph of a clique $K$. 
The graph $G$ is $S_{1,1,1}$-free, but does not admit any (balanced in any useful sense) separator of the form $N[P]$ for a small set $P \subseteq V(G)$. 
The \textsc{MWIS} problem on $G$ translates back to the maximum weight matching problem in the clique $K$;
this problem is polynomial-time solvable, but with very different methods than branching. In particular, 
we are not aware of any way of solving maximum weight matching in a clique in quasipolynomial time by simple branching. 
Thus, we expect that an algorithm for \textsc{MWIS} in $S_{t,t,t}$-free graphs, given such a graph $G$,
will discover that it is actually working with the line graph of a clique and apply maximum weight matching techniques
to the preimage graph $K$.

Chudnovsky and Seymour, in their project to understand claw-free graphs~\cite{claw-free-survey}, developed a good way of describing 
that a graph ``looks like a line graph'' by the notion of an \emph{extended strip decomposition}.
The formal definition can be found in \cref{sec:prelim}. Here, we remark that
in an extended strip decomposition of a graph, one can distinguish \emph{particles} being induced
subgraphs of the graph; an algorithm for \textsc{MWIS}
can recurse on individual particles, compute the maximum weight independent sets there, and combine the results
into a maximum weight independent set in the whole graph using a maximum weight matching algorithm
on an~auxiliary graph (cf.~\cite{DBLP:journals/corr/abs-1907-04585, DBLP:conf/soda/ChudnovskyPPT20}). 
Thus, an extended strip decomposition of a graph with particles of multiplicatively smaller size 
is very useful for recursion; it can be seen as an analog of splitting into connected components of
multiplicatively smaller size, as it is in the case of the components of $G-N[V(Q)]$ in \cref{thm:gyarfas}.

With the above discussion in mind, we can now state our main technical result.
\begin{theorem}\label{thm:main}
Given an $n$-vertex graph $G$ and $t \geq 1$, one can in polynomial time either:
\begin{itemize}
\item output an induced copy of $S_{t,t,t}$ in $G$, or
\item output a set $\mathcal{P}$ consisting of at most $11 \log n+6$ induced paths in $G$,
  each of length at most $t+1$, and a rigid extended strip decomposition of
  $G - N[\bigcup_{P \in \mathcal{P}} V(P)]$ whose every particle has at most $n/2$ vertices. 
\end{itemize}
\end{theorem}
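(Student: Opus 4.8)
The plan is to construct $\mathcal{P}$ and the decomposition \emph{iteratively}, maintaining at all times a rigid extended strip decomposition $\mathcal{H}$ of $G' := G - N[\bigcup_{P \in \mathcal{P}} V(P)]$ and shrinking the largest particle of $\mathcal{H}$ round by round. We initialise $\mathcal{P} = \emptyset$ with a trivial extended strip decomposition of $G$, cleaned up in a bounded number of steps to make it rigid (this contributes the additive constant). In each round we take the largest particle $C$ of the current $\mathcal{H}$; if $|C| \leq n/2$ we are done, and otherwise we invoke a polynomial-time \emph{splitting procedure}, described below, which either outputs an induced $S_{t,t,t}$ in $G$ --- and we stop --- or returns at most a constant number of induced paths in $G$, each of length at most $t+1$, which we add to $\mathcal{P}$, together with a new rigid extended strip decomposition of the updated $G'$ in which every particle is contained in a particle of the old $\mathcal{H}$, and in which the ``excess'' $\max(0, |C| - \lfloor n/2 \rfloor)$ of the largest particle has shrunk by a constant factor. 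As this excess is at most $n/2$ to begin with, $\Oh(\log n)$ rounds suffice; tuning the constants (a careful accounting shows at most $11$ retained paths per round) yields the bound $11 \log n + 6$. Two bookkeeping points need care: deleting $N_G[\bigcup V(P)]$ removes vertices from every particle simultaneously, so we must argue that restricting $\mathcal{H}$ to the smaller graph leaves the remaining particles valid (they only lose vertices) and can be re-rigidified for free; and the potential measuring progress must be chosen so that carving the single largest particle cannot spawn a still larger one.

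It remains to describe the splitting procedure for a particle $C$ with $|C| > n/2$. The easy, ``path-like'' case runs the Gy\'arf\'as argument behind \cref{thm:gyarfas} inside $G'[C]$: greedily grow an induced path $Q = q_1 q_2 \dots$, where each $q_{i+1}$ is chosen (as in the proof of \cref{thm:gyarfas}, keeping $Q$ induced) inside the heaviest connected component of $G'[C] - N[\{q_1, \dots, q_i\}]$, stopping once no component exceeds $n/2$ vertices. If $Q$ ends up with at most $t+2$ vertices, then deleting $N_G[V(Q)]$ splits $C$ into pieces of at most $n/2$ vertices; we add $Q$ to $\mathcal{P}$, replace the part of $H$ hosting $C$ by a small gadget whose parts are exactly these pieces (plus whatever of $N_G[V(Q)] \cap C$ survives in $G'$), and re-establish rigidity. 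Because $G$ being $S_{t,t,t}$-free does \emph{not} forbid long induced paths in $G'[C]$, the difficult case is that $Q$ comes out long; we never keep more than a prefix of $Q$ of length at most $t+1$, and the remainder of $Q$ serves only as a combinatorial witness feeding the next step.

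The heart of the proof is extracting progress from a long induced path in the $S_{t,t,t}$-free graph $G$. A long induced path inside $C$, together with the attachment structure of the rest of $G'$ to it, is forced into ``line-graph-like'' behaviour, which we capture through the algorithmic version of the three-in-a-tree theorem. We pick a set $Z$ of vertices drawn from the deep part of $C$ along $Q$ --- of size comparable to the excess of $C$ --- and run three-in-a-tree on a suitable auxiliary graph built from $G'$ around $C$. If the output is an induced subtree meeting $Z$ in three vertices, then its three branches towards $Z$ are long (this is precisely where the length of $Q$, i.e., the distance of $Z$ from the root, is used); after pruning and rerouting these branches we obtain three pairwise non-adjacent induced legs of length $t$ with a common endpoint, hence an induced $S_{t,t,t}$ in $G$. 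Otherwise the output is a rigid extended strip decomposition of the auxiliary graph in which the vertices of $Z$ are dispersed into essentially singleton vertex-parts; we pull this back to a refinement of $\mathcal{H}$ confined to the ``territory'' of $C$, so that $C$ is carved into new particles, none of which retains more than a constant fraction of $C$'s excess, while the rest of $\mathcal{H}$ is untouched. Splicing the local refinement into the global $\mathcal{H}$, deleting $N_G$ of the at most $t+1$ retained path-vertices, and re-rigidifying finishes the round.

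The main obstacle, and the technically most demanding part, is this last step in the difficult case: engineering the auxiliary graph and the set $Z$ so that \emph{both} three-in-a-tree outcomes are productive --- the tree outcome really yielding an induced $S_{t,t,t}$ (which requires delicate surgery to straighten three long but possibly interwoven tree-branches into genuinely induced, pairwise non-adjacent legs, forcing us to be slightly generous with the length, which is why the bound is $t+1$ rather than $t-1$), and the decomposition outcome really shrinking the big particle by a constant factor of its excess. Secondary but still nontrivial are the structural lemmas behind the bookkeeping: that an extended strip decomposition restricted to an induced subgraph, with a local gadget spliced into a single particle, can always be brought back to a rigid one without spending any path; and the final accounting that converts ``$\Oh(1)$ retained short paths per round, $\Oh(\log n)$ rounds'' into the explicit $11 \log n + 6$.
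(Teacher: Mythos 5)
There is a genuine gap at the heart of your splitting procedure: you assume that the decomposition outcome of three-in-a-tree can be engineered so that the big particle $C$ is ``carved into new particles, none of which retains more than a constant fraction of $C$'s excess.'' Nothing in \cref{thm:three-in-a-tree} supports this. The theorem only guarantees that the vertices of $Z$ become peripheral; it gives no control whatsoever over particle sizes, and a single full edge particle of the returned decomposition may swallow essentially all of $C$ no matter how $Z$ is chosen. Your potential (the excess of the largest particle drops by a constant factor each round) therefore has no mechanism enforcing it, and the round count $\Oh(\log n)$ does not follow. The paper circumvents exactly this obstacle by measuring progress not in particle size but in the length of the surviving Gy\'arf\'as path: the path is cut into three long, pairwise non-touching pieces whose prefixes are made peripheral (by deleting their ``shells''), and then \cref{thm:meet} --- which your proposal never invokes --- guarantees that the one oversized particle touches at most two of the three pieces. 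Recursing on that particle together with the at most two touching pieces shrinks the total path length by a factor $2/3$, and the oversized particle is simply handed down to the next level rather than being forced to shrink.

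A second, related problem is your choice of $Z$ as a set ``of size comparable to the excess of $C$'' drawn along the long induced path $Q$. With such a $Z$ the tree outcome of \cref{thm:three-in-a-tree} is not productive: a subpath of $Q$ itself is an induced tree containing three (indeed many) vertices of $Z$, so the algorithm may legitimately return a tree from which no induced $S_{t,t,t}$ can be extracted, no matter how you reroute branches. To make the tree outcome force an $S_{t,t,t}$ one must keep $Z$ tiny (in the paper, at most four vertices: the first vertices of the long path pieces) and remove the closed neighbourhoods of their length-$(t-1)$ prefixes outside the paths, so that any induced tree reaching three $Z$-vertices necessarily has three branches of length at least $t$. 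As written, both outcomes of your auxiliary three-in-a-tree call can fail to make progress, so the argument does not go through without importing the paper's path-splitting and \cref{thm:meet}-based mechanism.
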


Combining \cref{thm:main} with previously known algorithmic techniques,
we derive two algorithms for MWIS in $S_{t,t,t}$-free graphs.
Actually, our algorithms work in a slightly more general setting.
For integers $s,t \geq 1$, by $sS_{t,t,t}$ we denote the graph with $s$ connected components, each isomorphic to $S_{t,t,t}$.
Recall that by the observation of Alekseev~\cite{alekseev1982effect,Alekseev03} the only graphs $H$, for which we can hope for tractability results for MWIS in $H$-free graphs, are forests whose every component has at most three leaves.
We observe that each such $H$ is contained in $sS_{t,t,t}$, for some $s$ and $t$ depending on $H$.
Thus algorithms for $sS_{t,t,t}$-free graphs, for every $s$ and $t$, cover \emph{all} potential positive cases.

First, we observe that the statement of \cref{thm:main}
seamlessly combines with the method how~\cite{DBLP:journals/algorithmica/BacsoLMPTL19}
obtained a subexponential-time algorithm for \textsc{MWIS} in $P_t$-free graphs.
As a result, we obtain a subexponential-time algorithm for \textsc{MWIS} in $sS_{t,t,t}$-free graphs
with improved running time as compared to~\cite{DBLP:journals/corr/abs-1907-04585, DBLP:conf/soda/ChudnovskyPPT20}.

\begin{theorem}\label{thm:subexp}
Let $s,t \geq 1$ be constants.
Given an $n$-vertex $sS_{t,t,t}$-free graph $G$ with weights on vertices, one can 
in time exponential in $\Oh(\sqrt{n} \log n)$ compute an independent set in $G$ of maximum possible weight.
\end{theorem}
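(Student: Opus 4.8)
The plan is to combine Theorem~\ref{thm:main} with the branching scheme of Bacs\'o et al.~\cite{DBLP:journals/algorithmica/BacsoLMPTL19}, replacing the role of the Gy\'arf\'as path by the pair (small collection of short paths, rigid extended strip decomposition) produced by Theorem~\ref{thm:main}, and replacing "recurse on each connected component" by "recurse on each particle and then recombine via a weighted matching." First I would reduce the case $s>1$ to the case $s=1$: if $G$ is $sS_{t,t,t}$-free but contains an induced $S_{t,t,t}$, then deleting its closed neighborhood removes at least one potential copy, but more usefully one observes that a graph with no induced $sS_{t,t,t}$ in which every connected component is $S_{t,t,t}$-free has only a bounded number of components that are not $S_{t,t,t}$-free — actually the cleaner route is to note that $sS_{t,t,t}$-freeness implies that $G$ has at most $s-1$ pairwise "far apart" induced copies of $S_{t,t,t}$, so a bounded-depth branching on the closed neighborhoods of at most $s-1$ such copies reduces to the $S_{t,t,t}$-free case at the cost of a factor $n^{\Oh(st)}$, which is absorbed. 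I will therefore describe the algorithm for $S_{t,t,t}$-free $G$.

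\medskip\noindent
The core recursive procedure $\mathsf{Solve}(G,\wei)$ works as follows. If $G$ has a vertex $v$ of degree larger than $\sqrt{n}$ (where $n=|V(G)|$ is the size of the \emph{original} input, passed down unchanged), branch on $v$: recurse on $(G-v,\wei)$ and on $(G-N[v],\wei)$, and return the better of $\wei(v)$ plus the second value versus the first value. Otherwise every vertex has degree at most $\sqrt{n}$; invoke Theorem~\ref{thm:main}. Since $G$ is $S_{t,t,t}$-free the first outcome is impossible, so we obtain a set $\mathcal P$ of at most $11\log n+6$ induced paths, each on at most $t+2$ vertices, and a rigid extended strip decomposition of $G':=G-N[\bigcup_{P\in\mathcal P}V(P)]$ all of whose particles have at most $n/2$ vertices. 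Let $X:=\bigcup_{P\in\mathcal P}V(P)$. We now branch exhaustively over which subset $I_X\subseteq X$ of these path-vertices lies in the solution: there are at most $2^{|X|}\le 2^{(t+2)(11\log n+6)} = n^{\Oh(t)}$ choices. For each independent $I_X$, delete $N[I_X]$ from $G'$, observe the extended strip decomposition restricts to the remaining graph with particles still of size $\le n/2$, recursively solve \textsc{MWIS} on each particle (these calls are on graphs with at most $n/2$ vertices, but — crucially — the "$n$" parameter controlling the degree threshold and the path-count bound stays equal to the original input size), and combine the per-particle optima into a maximum-weight independent set of $G'-N[I_X]$ by solving a maximum-weight matching instance on the auxiliary graph associated to the extended strip decomposition, exactly as in~\cite{DBLP:journals/corr/abs-1907-04585, DBLP:conf/soda/ChudnovskyPPT20}; this recombination is polynomial. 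Return the best value of $\wei(I_X)$ plus the recombined optimum, over all $I_X$, and over the two branches in the high-degree case.

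\medskip\noindent
For the running time, set $T(m)$ to be the number of leaves of the recursion tree on a subinstance with $m$ vertices (with the global parameter $n$ fixed). In the high-degree branch we get $T(m)\le T(m-1)+T(m-\sqrt n-1)$; a standard argument (as in~\cite{DBLP:journals/algorithmica/BacsoLMPTL19}) shows that after at most $\sqrt n$ consecutive high-degree branchings the vertex count drops by at least $\sqrt n$, hence at most $\Oh(\sqrt n)$ such "rounds" occur before we hit the low-degree case, contributing a factor $2^{\Oh(\sqrt n)}$. In the low-degree branch we branch into $n^{\Oh(t)}$ subcases, each producing several recursive calls on graphs of size at most $n/2$ \emph{whose vertex sets are disjoint}; since sizes halve, the depth of low-degree branchings is $\Oh(\log n)$, and the product of the branching factors over this depth is $n^{\Oh(t\log n)} = 2^{\Oh(t\log^2 n)}$, which is absorbed into $2^{\Oh(\sqrt n\log n)}$ for constant $t$. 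Multiplying, the total number of leaves is $2^{\Oh(\sqrt n\log n)}$, and each leaf does polynomial work (including the matching recombination), giving the claimed bound. Correctness is by induction on $m$: the high-degree branch is exhaustive over $v\in I$ or $v\notin I$; in the low-degree branch, once $I_X=I\cap X$ is guessed correctly, the remainder of $I$ is an independent set of $G'-N[I_X]$, and by the defining property of extended strip decompositions (that particles capture all "local" parts of any independent set and the auxiliary matching captures the "crossing" contributions), the per-particle optima combined through the matching yield exactly the maximum weight of such a remainder. The main obstacle is the bookkeeping in the running-time analysis: one must carefully keep the parameter $n$ frozen across recursive calls so that the $n^{\Oh(t)}$ low-degree branching factor is charged against the $\Oh(\log n)$ halving depth rather than against the (only $\Oh(\sqrt n)$-many) high-degree rounds, and one must verify that restricting a rigid extended strip decomposition to $G'-N[I_X]$ preserves both rigidity and the particle-size bound — both of which are routine given the preliminaries but require care to state precisely.
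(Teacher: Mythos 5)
There is a genuine gap, and it is in the branching step of the low-degree phase. Theorem~\ref{thm:main} gives you an extended strip decomposition of $G' = G - N[\bigcup_{P\in\mathcal P}V(P)]$, i.e.\ the \emph{closed neighborhood} of the paths is removed, but you only branch over subsets $I_X$ of $X=\bigcup_{P\in\mathcal P}V(P)$, the path vertices themselves. The vertices of $N(X)\setminus X$ belong neither to $X$ nor to $G'$, so in every branch they are silently discarded: your candidate solutions are always contained in $X \cup (V(G)\setminus N[X])$. An optimal independent set may, however, use vertices of $N(X)\setminus X$ (there can be $\Theta(\sqrt n\,t\log n)$ of them, of arbitrary weight), so your claim that ``once $I_X=I\cap X$ is guessed correctly, the remainder of $I$ is an independent set of $G'-N[I_X]$'' is false — the remainder lives in $G - X - N[I_X]$, not in $G-N[X]-N[I_X]$. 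This breaks exactness. The paper's algorithm avoids this by exhaustively guessing the intersection $J$ of the solution with \emph{all of} $N[X]$; this is precisely where the degree-reduction phase is used: once every degree is at most $\sqrt{n/t}$, we have $|N[X]| = \Oh(\sqrt{n/t}\cdot t\log n)=\Oh(\sqrt{nt}\log n)$, so the $2^{|N[X]|}$ guesses fit the budget. Your cheaper $n^{\Oh(t)}$ branching is an artifact of guessing over the wrong set.

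Once you repair this, your bookkeeping with a frozen parameter $n$ also stops working: a guessing factor of $2^{\Oh(\sqrt n\log n)}$ at each of $\Oh(\log n)$ halving levels would multiply to $2^{\Oh(\sqrt n\log^2 n)}$. The paper instead lets the degree threshold and the guessing cost be measured in the \emph{current} instance size $m$, so the per-level cost $2^{\Oh(\sqrt{mt}\log m)}$ decays geometrically as particles halve, and the recurrence (cf.\ the bound~\eqref{eq:running-bnd}) closes at $2^{\Oh(\sqrt{tn}\log n)}$; combining the recursive particle solutions through \cref{lem:reduction-matching} and bounding the number of nonempty particles via \cref{obs:particle-bound,obs:h-bounds} is as you describe. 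Finally, the reduction from $s>1$ to $s=1$ needs no branching at all: as in \cref{cor:forestfree}, one repeatedly finds an induced copy $Y$ of $S_{t,t,t}$, deletes $N[Y]$, and adds $Y$ to the set $X$, so that the at most $(s-1)(3t+1)$ extra vertices are simply absorbed into the set whose closed neighborhood is later guessed.
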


Second, we observe that the statement of \cref{thm:main}
again seamlessly combines with the method how~\cite{DBLP:journals/corr/abs-1907-04585, DBLP:conf/soda/ChudnovskyPPT20} obtained a QPTAS for \textsc{MWIS} in $P_t$-free graphs, 
obtaining an arguably simpler QPTAS for \textsc{MWIS} in $sS_{t,t,t}$-free graphs
with improved running time 
(compared to~\cite{DBLP:journals/corr/abs-1907-04585, DBLP:conf/soda/ChudnovskyPPT20}).
\begin{theorem}\label{thm:qptas}
Let $s,t \geq 1$ be constants.
Given an $n$-vertex $sS_{t,t,t}$-free graph $G$ with weights on vertices, and a real $\varepsilon > 0$, one can 
in time exponential in $\Oh(\varepsilon^{\exponentqptaseps} \log^{\exponentqptaslogn} n)$ compute
an~independent set in $G$ that is within a~factor of $(1-\varepsilon)$ of the maximum possible weight.
\end{theorem}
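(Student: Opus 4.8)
The plan is to obtain \cref{thm:qptas} by plugging \cref{thm:main} into the QPTAS machinery that Chudnovsky, Pilipczuk, Pilipczuk, and Thomass\'e~\cite{DBLP:journals/corr/abs-1907-04585, DBLP:conf/soda/ChudnovskyPPT20} built for $P_t$-free graphs, with the rigid extended strip decomposition from \cref{thm:main} playing the role that the connected components of $G - N[V(Q)]$ play in \cref{thm:gyarfas}. I would set up the algorithm as a single recursive procedure that, given an induced subgraph $G'$ of the input together with a target accuracy, returns a near-maximum-weight independent set of $G'$, and I would have every call do three things. First comes heavy-vertex elimination: fix a threshold $\delta := \varepsilon/(c\,t\log^2 n)$ for a suitable constant $c$, call a vertex $v$ of $G'$ \emph{heavy} if $w(N(v)) \geq \delta\cdot\OPT(G')$, and invoke the branching of~\cite{DBLP:journals/corr/abs-1907-04585, DBLP:conf/soda/ChudnovskyPPT20} which, using $2^{\Oh(\delta^{-1}\log^{\Oh(1)} n)}$ leaves and at a multiplicative cost of only $1-\varepsilon/(c'\log n)$, reduces to the case that $G'$ has no heavy vertex; from then on every single closed neighbourhood of $G'$ carries at most a $\delta$-fraction of $\OPT(G')$. (The value $\OPT(G')$ only needs to be guessed up to a factor $2$, at polynomial overhead.)

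Second comes a reduction to a single claw: I would greedily grow a family $X_1,\dots,X_k$ of vertex-disjoint, pairwise anticomplete induced copies of $S_{t,t,t}$ in $G'$, appending $X_{i+1}$ whenever $G' - N[X_1\cup\dots\cup X_i]$ still contains an induced $S_{t,t,t}$. Since $G$ is $sS_{t,t,t}$-free and the $X_j$ are pairwise anticomplete we get $k \leq s-1$, and by maximality $G' - N[X_1\cup\dots\cup X_k]$ is $S_{t,t,t}$-free; as the union has at most $(s-1)(3t+1)$ vertices, none of them heavy, its closed neighbourhood carries at most an $\varepsilon/(c'\log n)$-fraction of $\OPT(G')$, so it can simply be deleted, losing a factor $1-\varepsilon/(c'\log n)$ and landing in an $S_{t,t,t}$-free graph. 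Third comes the decomposition step: I would apply \cref{thm:main} to this $S_{t,t,t}$-free graph, which must return the second outcome, namely a set $\mathcal{P}$ of $\Oh(\log n)$ induced paths of length $\Oh(t)$ together with a \emph{rigid} extended strip decomposition of $G'' := G' - N[\bigcup_{P\in\mathcal{P}}V(P)]$ all of whose particles have at most $n/2$ vertices. Since $\bigcup_{P\in\mathcal{P}}V(P)$ has only $\Oh(t\log n)$ vertices, none heavy, deleting its closed neighbourhood costs another $1-\varepsilon/(c'\log n)$ factor; I would then recurse on each particle (each on at most $n/2$ vertices) and reassemble the returned solutions into a near-optimal independent set of $G''$ via the polynomial-time, matching-based combination routine of~\cite{DBLP:journals/corr/abs-1907-04585, DBLP:conf/soda/ChudnovskyPPT20} --- which is exactly the place where rigidity of the decomposition is used, and exactly what \cref{thm:main} supplies.

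For the accounting: each decomposition step halves the vertex count, so the recursion has depth $\Oh(\log n)$; the per-level multiplicative losses (one from heavy-vertex elimination, one from the reduction to a single claw, one from discarding $N[\bigcup_{P\in\mathcal{P}}V(P)]$, each at most $\varepsilon/(c'\log n)$) multiply to at least $(1-\varepsilon/(c'\log n))^{\Oh(\log n)} \geq 1-\varepsilon$ for a suitable $c'$ (with the degenerate regime of $\varepsilon$ bounded away from $0$ handled trivially); and the running time is the product over the $\Oh(\log n)$ levels of a $\mathrm{poly}(n)$ factor (branching into the particles, and guessing $\OPT$) times the $2^{\Oh(\delta^{-1}\log^{\Oh(1)} n)}$ heavy-vertex branching. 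With $s,t$ constant and $\delta^{-1} = \Theta(\varepsilon^{-1}\log^2 n)$, this is $2^{\Oh(\varepsilon^{-1}\log^5 n)}$, as claimed; correctness of the combination step and of the heavy-vertex branching is inherited from~\cite{DBLP:journals/corr/abs-1907-04585, DBLP:conf/soda/ChudnovskyPPT20}.

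The genuinely new ingredient here is \cref{thm:main}; everything else is an assembly of known components, so I expect the principal difficulty to be bookkeeping rather than a new idea. Concretely, one must choose the heaviness threshold $\delta$ and the per-level error budget so that the $\Oh(\log n)$ layers of discarded closed neighbourhoods --- whose total size is controlled by the $\Oh(t\log n)$ bound on $|\bigcup_{P\in\mathcal{P}}V(P)|$ in \cref{thm:main} and the $\Oh(st)$ bound from the reduction to a single claw --- together consume at most an $\varepsilon$-fraction of the optimum, while keeping $\delta^{-1}$ (hence the heavy-vertex branching factor) as small as possible; it is this tension, together with the precise polylogarithmic leaf count of the heavy-vertex branching, that pins down the exponent $\log^5 n$. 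A secondary, routine point is to confirm that the combination routine of~\cite{DBLP:journals/corr/abs-1907-04585, DBLP:conf/soda/ChudnovskyPPT20} accepts precisely the rigid extended strip decomposition produced by \cref{thm:main}, which is immediate since rigidity is exactly its hypothesis. (The very same template, with heavy-vertex elimination replaced by branching on a highest-degree vertex whenever its degree exceeds $\sqrt{n}$, yields \cref{thm:subexp}.)
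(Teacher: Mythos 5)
Your proposal takes essentially the same route as the paper: eliminate heavy vertices, reduce the $sS_{t,t,t}$-free case to the $S_{t,t,t}$-free case by removing the closed neighbourhood of a small bounded-size set hitting all copies of $S_{t,t,t}$ (your greedy maximal anticomplete family is functionally identical to what the paper packages as Corollary~\ref{cor:forestfree}), apply Theorem~\ref{thm:main}, recurse on particles, and recombine via Lemma~\ref{lem:reduction-matching}. The arithmetic matching the $\log n$ recursion depth, the $\Oh(\log n)$ cut-away vertices per level, and the $\Theta(\varepsilon^{-1}\log^2 n)$ heaviness threshold to obtain $2^{\Oh(\varepsilon^{-1}\log^5 n)}$ is also the same.

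There is, however, one place where your description would not go through as written: the definition of a \emph{heavy} vertex. You declare $v$ heavy when $w(N(v))\geq\delta\cdot\OPT(G')$, i.e.\ using the raw weight of the neighbourhood. The notion actually used in~\cite{DBLP:journals/corr/abs-1907-04585,DBLP:conf/soda/ChudnovskyPPT20} (and in the paper, see Lemma~\ref{lem:heavy-vertices}) is $\wei(N[v]\cap I_{\mathsf{OPT}})>\beta\cdot\wei(I_{\mathsf{OPT}})$, i.e.\ the $I_{\mathsf{OPT}}$-weight in the closed neighbourhood. This distinction is load-bearing: the cheap heavy-vertex elimination works by exhaustively guessing a small set $J\subseteq I_{\mathsf{OPT}}$ (of size $\lceil\beta^{-1}\log n\rceil$) whose closed neighbourhood covers \emph{all} heavy vertices, and then deleting $N(J)$, which removes no optimum vertex; the existence of such a small $J$ is exactly what Lemma~\ref{lem:heavy-vertices} gives, and its proof hinges on heaviness being measured against $I_{\mathsf{OPT}}$. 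With the raw-weight definition, there is no reason a small $J$ of this kind should exist, and neither of the two usual branching moves (``delete $v$'' or ``take $v$, delete $N[v]$'') makes heavy vertices disappear or a progress measure decrease. Replacing your threshold by the $I_{\mathsf{OPT}}$-relative one fixes this, and also removes the need to guess $\OPT(G')$ at all: the paper simply enumerates all candidate sets $J$ of the prescribed size, computes a solution for each, and returns the best, relying on the fact that the combination routine (Lemma~\ref{lem:reduction-matching}) does not need to know $\varsigma$ or $I_0$.
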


After preliminaries in \cref{sec:prelim},
we prove \cref{thm:main} in \cref{sec:main}.
Proofs of Theorems~\ref{thm:subexp} and~\ref{thm:qptas} are provided in \cref{sec:algo}.
Finally, we discuss future steps in \cref{sec:conclusion}.

\section{Preliminaries}\label{sec:prelim}
\subparagraph*{Notation.} 
For a family $\mathcal{Q}$ of sets, by $\bigcup \mathcal{Q}$ we denote $\bigcup_{Q \in \mathcal{Q}} Q$.
If the base of a logarithmic function is not specified, we mean the logarithm of base 2, i.e., $\log n := 
\log_2 n$.
For a function $\wei : V \to \Z$ and subset $V' \subseteq V$, we denote $\wei(V'):=\sum_{v \in V} \wei(v)$.

Let $G$ be a graph. For $X \subseteq V(G)$, by $G[X]$ we denote the subgraph of $G$ induced by $X$, i.e., $(X, \{uv \in E(G) : u,v \in X\})$.
If the graph $G$ is clear from the context, we will often identify induced subgraphs with their vertex sets.
The sets $X,Y \subseteq V(G)$ are \emph{complete} to each other if for every $x \in X$ and $y \in Y$ the edge $xy$ is present in $G$.
Note that this, in particular, implies that $X$ and $Y$ are disjoint.
We say that two sets $X,Y$ \emph{touch} if $X \cap Y \neq \emptyset$ or there is an edge with one end in $X$ and another in $Y$.

For a vertex $v$, by $N_G(v)$ we denote the set of neighbors of $v$, and by $N_G[v]$ we denote the set $N_G(v) \cup \{v\}$.
For a set $X \subseteq V(G)$, we also define $N_G(X) := \bigcup_{v \in X} N(v) \setminus X$, and $N_G[X] = N_G(X) \cup X$.
If it does not lead to confusion, we omit the subscript and write simply $N( \cdot)$ and $N[ \cdot]$.

By $T(G)$, we denote the set of all triangles in $G$.
Similarly to writing $xy \in E(G)$, we will write $xyz \in T(G)$ to indicate that $G[\{x,y,z\}] \simeq K_3$.

\subparagraph*{Extended strip decompositions.} 
Now let us define a certain graph decomposition which will play an important role in the paper.
An \emph{extended strip decomposition} of a graph $G$ is a pair $(H, \eta)$ that consists of:
\begin{itemize}
\item a simple graph $H$,
\item a set $\eta(x) \subseteq V(G)$ for every $x \in V(H)$,
\item a set $\eta(xy) \subseteq V(G)$ for every $xy \in E(H)$, and its subsets $\eta(xy,x),\eta(xy,y) \subseteq \eta(xy)$,
\item a set $\eta(xyz) \subseteq V(G)$  for every $xyz \in T(H)$,
\end{itemize}
which satisfy the following properties (see also see \cref{fig:esd}):
\begin{enumerate}
\item $\{\eta(o)~|~o \in V(H)\cup E(H) \cup T(H)\}$ is a partition of $V(G)$,
\item for every $x \in V(H)$ and every distinct $y,z \in N_H(x)$, the set $\eta(xy,x)$ is complete to $\eta(xz,x)$,
\item every $uv \in E(G)$ is contained in one of the sets $\eta(o)$ for $o \in V(H) \cup E(H)\cup T(H)$, or is as follows:
\begin{itemize}
\item $u \in \eta(xy,x), v\in \eta(xz,x)$ for some $x \in V(H)$ and $y,z \in N_H(x)$, or
\item $u \in \eta(xy,x), v\in \eta(x)$ for some $xy \in E(H)$, or
\item $u \in \eta(xyz)$ and $v\in \eta(xy,x) \cap \eta(xy,y)$ for some $xyz \in T(H)$. 
\end{itemize}
\end{enumerate}

\begin{figure}[htb]
\includegraphics[scale=1,page=1]{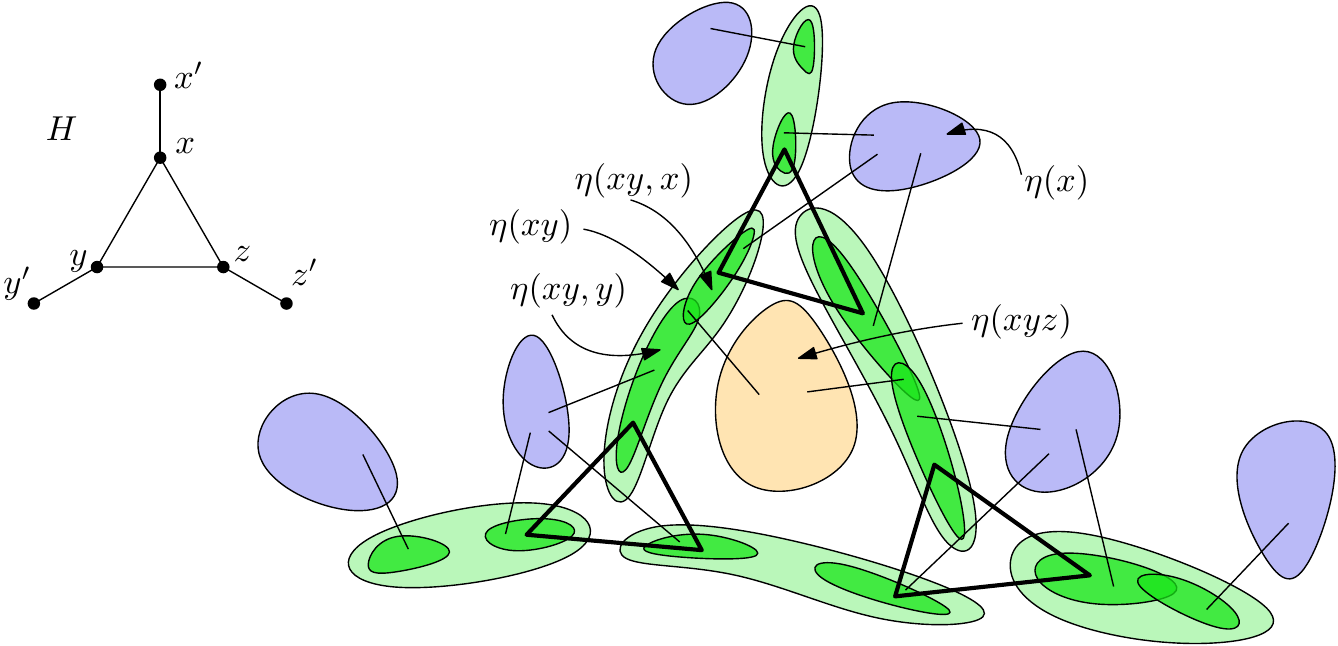}
\caption{A graph $H$ and as extended strip decomposition $(H,\eta)$ of a graph $G$.
Sets $\eta(\cdot)$ corresponding to vertices, edges, and the triangle of $H$ are marked green, blue, and orange, respectively.
The edges between distinct sets are drawn thick if they must exist, and thin if they may exist.
}\label{fig:esd}
\end{figure}

Note that for an extended strip decomposition $(H,\eta)$ of a graph $G$, the number of vertices of $H$ can be much larger than the number of vertices of $G$.
However, in such case many sets $\eta(\cdot)$ are empty and thus $H$ is ``unnecessarily complicated.''
An extended strip decomposition $(H,\eta)$ is \emph{rigid} if (i) for every $xy \in E(H)$ it holds that $\eta(xy,x) \neq \emptyset$,
and (ii) for every $x \in V(H)$ such that $x$ is an isolated vertex it holds that $\eta(x) \neq \emptyset$.\sv{\footnote{Proofs of statements marked by $\spadesuit$ are postponed to the appendix.}}

\lv{\begin{restatable}{observation}{hBounds}\label{obs:h-bounds}}
\sv{\begin{restatable}[$\spadesuit$]{observation}{hBounds}\label{obs:h-bounds}}
Let $(H,\eta)$ be a rigid extended strip decomposition of an $n$-vertex graph $G$.
Then $|E(H)| \leq n$ and $|V(H)| \leq 2n$.
\end{restatable}

\toappendix{%
  \sv{\hBounds*}
\begin{proof}
Recall that since $(H,\eta)$ is rigid, for every $xy \in E(H)$ we have that $\emptyset \neq \eta(xy,x) \subseteq \eta(xy)$,
and for every isolated vertex $x$ of $H$ we have $\eta(x) \neq \emptyset$.

Let $V_0$ and $V_+$ denote, respectively, the sets of vertices of $H$ with degree 0 and more than 0.
As the family $\{ \eta(xy) ~|~ xy \in E(H) \} \cup \{ \eta(x) ~|~ x \in V_0 \}$ consists of pairwise disjoint nonempty subsets of $V(G)$, we conclude that $|E(H)| + |V_0| \leq n$ and therefore $|E(H)| \leq n$.

Note that by the handshaking lemma we have $|E(H)| \geq |V_+|/2$, and so 
$|V(H)| = |V_0| + |V_+| \leq |V_0| + 2|E(H)| \leq 2n$ by the previous argument.
\end{proof}
}

We say that a vertex $v \in V(G)$ is \emph{peripheral} in $(H,\eta)$ if there is a degree-one vertex $x$ of $H$,
such that $\eta(xy,x)=\{v\}$, where $y$ is the (unique) neighbor of $x$ in $H$.
For a set $Z \subseteq V(G)$, we say that $(H, \eta)$ is an \emph{extended strip decomposition of $(G,Z)$} if $H$ has $|Z|$ degree-one vertices and each vertex of $Z$ is peripheral in $(H,\eta)$.

The following theorem by Chudnovsky and Seymour~\cite{DBLP:journals/combinatorica/ChudnovskyS10}
is a slight strengthening of their celebrated solution of the famous \emph{three-in-a-tree} problem.
We will use it as a black-box to build extended strip decompositions.

\begin{theorem}[{Chudnovsky, Seymour~\cite[Section 6]{DBLP:journals/combinatorica/ChudnovskyS10}}]\label{thm:three-in-a-tree}
Let $G$ be an $n$-vertex  graph and consider $Z \subseteq V(G)$ with $|Z| \geq 2$.
There is an algorithm that runs in time $\Oh(n^5)$ and returns one of the following:
\begin{itemize}
\item an induced subtree of $G$ containing at least three elements of $Z$,
\item a rigid extended strip decomposition $(H,\eta)$ of $(G,Z)$.
\end{itemize}
\end{theorem}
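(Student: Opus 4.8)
The final statement to prove is \cref{thm:three-in-a-tree}, the algorithmic three-in-a-tree theorem of Chudnovsky and Seymour. Since the excerpt explicitly marks this as a black-box we take from~\cite[Section 6]{DBLP:journals/combinatorica/ChudnovskyS10}, a self-contained proof is out of scope; nevertheless, here is how I would prove it if forced to reconstruct it. The plan is to follow the Chudnovsky--Seymour framework: reduce the weighted/annotated problem to a clean combinatorial dichotomy on a carefully chosen auxiliary structure, then extract either an induced subtree meeting $Z$ in three vertices, or a rigid extended strip decomposition of $(G,Z)$.

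First I would set up the correct notion of a \emph{witness}: informally, a ``three-in-a-tree'' is an induced subtree $T$ of $G$ with $|V(T) \cap Z| \geq 3$. The key structural claim is the dichotomy: if no such $T$ exists, then $G$ admits a rigid extended strip decomposition $(H,\eta)$ of $(G,Z)$. I would prove this by an extremal / minimal-counterexample argument, or equivalently by a bottom-up decomposition. The standard route is to analyze how pairs of vertices of $Z$ can be ``linked'' by induced paths: define, for $z,z' \in Z$, the set of vertices lying on some induced $z$--$z'$ path, and study the ``anatomy'' of these linkages. When three such linkages can be combined without creating extra adjacencies, one gets the subtree; when they cannot, the obstructions organize themselves into the strips $\eta(xy)$ (the ``through'' parts), the hubs $\eta(x)$ (the junctions), and the triangle sets $\eta(xyz)$. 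Each vertex of $Z$ becomes a degree-one vertex $x$ of $H$ with $\eta(xy,x) = \{z\}$, i.e.\ peripheral, which is exactly what ``extended strip decomposition of $(G,Z)$'' demands. Rigidity is then arranged by a cleanup pass: suppress edges $xy \in E(H)$ with $\eta(xy,x) = \emptyset$ (contracting or merging as appropriate) and delete isolated vertices $x$ with $\eta(x) = \emptyset$; one checks this preserves all three defining properties of an extended strip decomposition while not destroying peripheral vertices of $Z$.

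Second I would turn the existence proof into an $\Oh(n^5)$ algorithm. The original three-in-a-tree algorithm of Chudnovsky and Seymour runs in time $\Oh(n^4)$ for the decision version; the refinement in their Section~6 that additionally produces the extended strip decomposition costs an extra factor, giving the $\Oh(n^5)$ bound quoted here. The algorithmic content mirrors the proof: one maintains a tentative decomposition and repeatedly tests, in polynomial time, whether a local obstruction either yields a genuine induced subtree hitting three elements of $Z$ (which we output) or forces a refinement of the decomposition. Each refinement strictly increases a bounded potential (e.g.\ $|V(H)| + |E(H)|$, which is $\Oh(n)$ by \cref{obs:h-bounds} once rigidity is in force, or some analogous monovariant before cleanup), so the number of iterations is polynomial, and each iteration is a bounded number of reachability/separator computations, each $\Oh(n^2)$ or so; multiplying out gives $\Oh(n^5)$. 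Finally I would run the rigidity cleanup described above as a post-processing step and invoke \cref{obs:h-bounds} to certify the output is well-formed.

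\textbf{Main obstacle.} The hard part is the structural dichotomy itself --- proving that the \emph{absence} of an induced subtree meeting $Z$ in three vertices \emph{forces} an extended strip decomposition of $(G,Z)$, and doing so constructively. This is precisely the deep combinatorial heart of the three-in-a-tree theorem; it requires a delicate case analysis of how induced paths between prescribed terminals interact, and it is the reason the excerpt (wisely) cites it as a black box rather than reproving it. Everything else --- the reduction to this dichotomy, the rigidity cleanup, and the running-time bookkeeping --- is comparatively routine once the dichotomy and its polynomial-time realization are granted. For the purposes of this paper, I would therefore simply cite \cite[Section 6]{DBLP:journals/combinatorica/ChudnovskyS10} for \cref{thm:three-in-a-tree} and use it unchanged.
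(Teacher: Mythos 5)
The paper does not prove \cref{thm:three-in-a-tree}; it imports it verbatim as a black box from Chudnovsky and Seymour~\cite[Section 6]{DBLP:journals/combinatorica/ChudnovskyS10}, exactly as you ultimately conclude you should do. Your high-level sketch of the dichotomy and the rigidity cleanup is a reasonable (if necessarily informal) description of what that reference establishes, but since neither you nor the paper attempt a self-contained proof, there is nothing to compare beyond noting that both treatments agree: cite and use unchanged.
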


Let us point out that actually, an extended strip decomposition produced by \cref{thm:three-in-a-tree} satisfies more structural properties,
but of our purpose, we will only use the fact that it is rigid.

\subparagraph*{Particles of extended strip decompositions.}
Let $(H, \eta)$ be an extended strip decomposition of a graph $G$.
We introduce some special subsets of  $V(G)$ called \emph{particles}, divided into five \emph{types}.
\begin{align*}
\textrm{vertex particle:} &\quad A_{x} := \eta(x) \text{ for each } x \in V(H)\\
\textrm{edge interior particle:} &\quad A_{xy}^{\perp} := \eta(xy) \setminus (\eta(xy,x) \cup \eta(xy,y)) \text{ for each } xy \in E(H),\\
\textrm{half-edge particle:} &\quad A_{xy}^{x} :=  \eta(x) \cup \eta(xy) \setminus \eta(xy,y) \text{ for each } xy \in E(H),\\
\textrm{full edge particle:} &\quad A_{xy}^{xy} := \eta(x) \cup \eta(y) \cup \eta(xy) \cup \bigcup_{z ~:~ xyz \in T(H)} \eta(xyz) \text{ for each } xy \in E(H),\\
\textrm{triangle particle:} &\quad A_{xyz} := \eta(xyz) \text{ for each } xyz \in T(H).
\end{align*}

Observe that the number of all particles of $(H,\eta)$ is at most $\Oh(|V(H)|^3)$.
However, the number of nonempty particles is linear in the number of vertices of $G$.

\sv{\begin{restatable}[$\spadesuit$]{observation}{particleBound}\label{obs:particle-bound}}
\lv{\begin{restatable}{observation}{particleBound}\label{obs:particle-bound}}
Let $(H,\eta)$ be an extended strip decomposition of an $n$-vertex graph.
Then the number of nonempty particles of $(H, \eta)$ is bounded by $4n$.
\end{restatable}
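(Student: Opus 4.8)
The plan is to assign to every nonempty particle of $(H,\eta)$ a single object of $H$ whose $\eta$-set is nonempty, in such a way that no object is assigned more than four particles. Put $\mathcal{O} := V(H) \cup E(H) \cup T(H)$ and $\mathcal{O}^{\ast} := \{o \in \mathcal{O} : \eta(o) \neq \emptyset\}$. By property~(1) of an extended strip decomposition the sets $\eta(o)$ for $o \in \mathcal{O}^{\ast}$ are pairwise disjoint nonempty subsets of the $n$-element set $V(G)$, so $|\mathcal{O}^{\ast}| \leq n$; hence an assignment with the above property immediately yields the bound $4n$.

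The assignment is the natural one. A nonempty vertex particle $A_x = \eta(x)$ witnesses $x \in \mathcal{O}^{\ast}$ and is assigned to $x$; a nonempty triangle particle $A_{xyz} = \eta(xyz)$ witnesses $xyz \in \mathcal{O}^{\ast}$ and is assigned to $xyz$. A nonempty edge interior particle satisfies $\emptyset \neq A_{xy}^{\perp} \subseteq \eta(xy)$, hence $xy \in \mathcal{O}^{\ast}$, and is assigned to $xy$. Finally, for every edge $xy$ with $\eta(xy) \neq \emptyset$ we additionally assign the half-edge particles $A_{xy}^{x}, A_{xy}^{y}$ and the full-edge particle $A_{xy}^{xy}$ to $xy$. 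Thus an object $xy \in E(H)$ receives at most the four particles $A_{xy}^{\perp}, A_{xy}^{x}, A_{xy}^{y}, A_{xy}^{xy}$, an object $x \in V(H)$ only $A_x$, and an object $xyz \in T(H)$ only $A_{xyz}$, so no object receives more than four particles; summing over $\mathcal{O}^{\ast}$ will give $4n$.

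What remains — and the step I expect to be the real obstacle — is to check that the assignment is exhaustive, i.e. that every nonempty particle was assigned. The only particles not covered above are a nonempty half-edge or full-edge particle of an edge $xy$ with $\eta(xy) = \emptyset$. In that degenerate case $A_{xy}^{\perp} = \emptyset$, while $A_{xy}^{x} = \eta(x)$ and $A_{xy}^{y} = \eta(y)$ are, as subsets of $V(G)$, exactly the vertex particles $A_x$ and $A_y$ and so contribute nothing new; the one genuinely awkward object is $A_{xy}^{xy} = \eta(x) \cup \eta(y) \cup \bigcup_{z ~:~ xyz \in T(H)} \eta(xyz)$, which a priori could be a fresh nonempty set spanning several $\eta$-parts. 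I would dispose of this case using the structure of extended strip decompositions; in the cleanest scenario, namely the rigid decompositions that we actually feed into the algorithm (in particular those produced by \cref{thm:three-in-a-tree}), one has $\eta(xy) \supseteq \eta(xy,x) \neq \emptyset$ for every edge $xy$, so every edge of $H$ already lies in $\mathcal{O}^{\ast}$, the assignment is total, and the count goes through verbatim. Putting the pieces together: each nonempty particle is assigned to an element of $\mathcal{O}^{\ast}$, each element of $\mathcal{O}^{\ast}$ at most four times, and therefore the number of nonempty particles is at most $4|\mathcal{O}^{\ast}| \leq 4n$.
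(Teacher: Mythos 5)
Your counting is the same as the paper's: write $V', E', T'$ for the objects of $V(H), E(H), T(H)$ with nonempty $\eta$-set; disjointness of the $\eta$-sets gives $|V'|+|E'|+|T'| \leq n$; a vertex or triangle contributes one nonempty particle, an edge at most four, so the total is at most $|V'|+|T'|+4|E'| \leq 4n$. The wrinkle you flag at the end, however, is a real one, and the paper's proof skips past it. As written the observation does not assume rigidity, and then the bookkeeping is genuinely incomplete: for an edge $xy$ with $\eta(xy)=\emptyset$ the full-edge particle $A_{xy}^{xy}=\eta(x)\cup\eta(y)\cup\bigcup_{z:xyz\in T(H)}\eta(xyz)$ can be a fresh nonempty set that is assigned to nothing. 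Indeed, take $G$ edgeless on $n$ vertices, $H=K_n$, and each $\eta(x)$ a distinct singleton (all other $\eta$-sets empty); this is a valid extended strip decomposition, yet every pair $\{x,y\}$ yields a distinct nonempty full-edge particle, giving $\Theta(n^2)$ nonempty particles and violating the claimed bound. So the statement tacitly needs rigidity, and your instinct that rigidity repairs it is exactly right: then $\eta(xy)\supseteq\eta(xy,x)\neq\emptyset$ for every edge, so $E'=E(H)$, the assignment is total, and $4n$ follows as you computed. This also matches how the observation is actually used in the paper — always on a rigid decomposition, or (in the running-time analysis of \cref{thm:subexp}) by first bounding the nonempty particles of a non-rigid $(H,\eta')$ by those of the rigid $(H,\eta)$ it was pruned from. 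In short, your proposal is correct in the regime that matters and is more careful than the published argument about the degenerate edges it implicitly excludes.
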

\toappendix{%
  \sv{\particleBound*}
\begin{proof}
Let $V',E',T'$, respectively, the subsets consisting of those elements $o$ of $V(H)$, $E(H)$, or $T(H)$,
for which $\eta(o) \neq \emptyset$.
Observe that each $o \in V' \cup T'$ gives rise to one nonempty particle $A_o$,
and each $xy \in E'$ gives rise to at most four nonempty particles: $A_{xy}^{\perp}, A_{xy}^x,A_{xy}^y,A_{xy}^{xy}$.
Moreover, since $\{\eta(o)~|~o \in V' \cup E' \cup T'\}$ are pairwise disjoint subsets of $V(G)$,
we have that $|V'|+|E'|+|T'| \leq n$.
Hence, the number of nonempty particles is bounded by $|V'|+|T'|+4|E'| =(|V'|+|T'|+|E'|) + 3|E'| \leq 4n.$
\end{proof}
}

A vertex particle $A_x$ is \emph{trivial} if $x$ is an isolated vertex in $H$.
Similarly, an extended strip decomposition $(H,\eta)$ is \emph{trivial} if $H$ is an edgeless graph. The following observation follows immediately from the definitions of an extended strip decomposition and particles.

\begin{observation}\label{obs:particlenbrhood}
Let $(H,\eta)$ be an extended strip decomposition of a graph $G$.
For each $xy \in E(H)$ the following hold:
\begin{enumerate}
\item $A^{\perp}_{xy} \subseteq A_{xy}^x \subseteq A_{xy}^{xy}$,
\item for any $v_x \in \eta(xy,x)$ and $v_y \in \eta(xy,y)$ we have $N(A_{xy}^{xy}) = N(v_x) \cup N(v_y) \setminus A_{xy}^{xy}$.
\end{enumerate}
\end{observation}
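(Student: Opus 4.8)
The plan is to unfold the definitions for the first item, and for the second item to reduce the ``$\subseteq$'' inclusion to a short case analysis on which block of the partition $\{\eta(o)\}$ contains the endpoint inside $A_{xy}^{xy}$ of an edge leaving $A_{xy}^{xy}$. For the first item, since $\eta(xy,y)\subseteq\eta(xy,x)\cup\eta(xy,y)$ we get $A^{\perp}_{xy}=\eta(xy)\setminus(\eta(xy,x)\cup\eta(xy,y))\subseteq\eta(xy)\setminus\eta(xy,y)\subseteq(\eta(x)\cup\eta(xy))\setminus\eta(xy,y)=A_{xy}^{x}$, where the last step uses that $\eta(x)$ and $\eta(xy)$ are distinct blocks of the partition (property~1), hence $\eta(x)$ is disjoint from $\eta(xy,y)$; then $A_{xy}^{x}\subseteq\eta(x)\cup\eta(xy)\subseteq\eta(x)\cup\eta(y)\cup\eta(xy)\cup\bigcup_{z\,:\,xyz\in T(H)}\eta(xyz)=A_{xy}^{xy}$.

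For the second item, write $N(A_{xy}^{xy})=\big(\bigcup_{v\in A_{xy}^{xy}}N(v)\big)\setminus A_{xy}^{xy}$. The inclusion ``$\supseteq$'' is immediate: $v_x,v_y\in\eta(xy)\subseteq A_{xy}^{xy}$, so $N(v_x)\cup N(v_y)\subseteq\bigcup_{v\in A_{xy}^{xy}}N(v)$, and removing $A_{xy}^{xy}$ from both sides gives $(N(v_x)\cup N(v_y))\setminus A_{xy}^{xy}\subseteq N(A_{xy}^{xy})$. For ``$\subseteq$'', fix $w\in N(A_{xy}^{xy})$; then $w\notin A_{xy}^{xy}$ and $vw\in E(G)$ for some $v\in A_{xy}^{xy}$, and the goal is $w\in N(v_x)\cup N(v_y)$. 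By property~1, $v$ lies in a single block of the partition, and since $A_{xy}^{xy}=\eta(x)\cup\eta(y)\cup\eta(xy)\cup\bigcup_{z\,:\,xyz\in T(H)}\eta(xyz)$, that block is one of $\eta(x),\eta(y),\eta(xy)$, or $\eta(xyz)$ for a triangle $xyz\in T(H)$, and it is contained in $A_{xy}^{xy}$. Consequently $vw$ cannot be ``contained in some $\eta(o)$'' in the sense of property~3 (that would force $w$ into the same block as $v$, hence into $A_{xy}^{xy}$), so $vw$ has one of the three exceptional shapes of property~3. I would then go through these three shapes, in both orientations of the edge $vw$: in each case, knowing that the endpoint inside $A_{xy}^{xy}$ lies in one of the four blocks above pins down the labels in that shape so that the relevant edge must be $xy$, and a direct check gives that either $w\in A_{xy}^{xy}$ (impossible), or $w\in\eta(x\gamma,x)$ for some $\gamma\in N_H(x)\setminus\{y\}$, or $w\in\eta(y\gamma,y)$ for some $\gamma\in N_H(y)\setminus\{x\}$. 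In the first remaining case, property~2 applied at $x$ with the distinct neighbours $y$ and $\gamma$ gives that $\eta(xy,x)$ is complete to $\eta(x\gamma,x)$, so $v_xw\in E(G)$ and $w\in N(v_x)$; in the second remaining case, symmetrically $w\in N(v_y)$. This establishes ``$\subseteq$'' and hence the equality.

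I expect the only nontrivial part to be the bookkeeping in the case analysis over the three shapes of property~3: it is elementary, but one has to treat each shape in both orientations and repeatedly invoke that the three vertices of a triangle of $H$ are pairwise distinct and that, by property~1, every edge-block $\eta(x'y')$ is disjoint from every vertex-block and every triangle-block. There is no conceptual obstacle beyond carefully spelling out these routine checks.
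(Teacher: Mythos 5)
Your proposal is correct: both inclusions in the first item and both directions of the second item check out, and the case analysis over the three exceptional edge shapes of property~3 (using the partition property to identify $v$'s block and property~2 at $x$ or $y$ to get adjacency to $v_x$ or $v_y$) is exactly the routine verification needed. The paper offers no written proof, stating only that the observation follows immediately from the definitions, so your argument is the same definitional unfolding, just spelled out in full.
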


We conclude this section by recalling an important property of particles of extended strip decompositions,
observed by Chudnovsky et al.~\cite{DBLP:journals/corr/abs-1907-04585}.

\begin{theorem}[{Chudnovsky et al.~\cite[Lemma 6.8]{DBLP:journals/corr/abs-1907-04585}}]\label{thm:meet}
Let $(H,\eta)$ be an extended strip decomposition of $G$.
Suppose $P_1,P_2,P_3$ are three induced paths in $G$ that do not touch each other, and moreover each of $P_1,P_2,P_3$ has an endvertex that is peripheral in $(H,\eta)$.
Then in $(H,\eta)$ there is no particle that touches each of $P_1,P_2,P_3$.
\end{theorem}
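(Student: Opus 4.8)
Let $(H,\eta)$ be an extended strip decomposition of $G$, and $P_1,P_2,P_3$ three pairwise non-touching induced paths in $G$, each with a peripheral endvertex; then no particle touches all three.

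The plan is to argue by contradiction: suppose a particle $A$ touches each of $P_1,P_2,P_3$. The strategy is to use the peripheral endvertices together with the "local" structure of the particle $A$ to build an induced $S_{?,?,?}$-like configuration inside $H$ or $G$ that contradicts either the definition of an extended strip decomposition or the non-touching assumption.

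\medskip

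First I would set up notation: for each $i$, let $v_i$ be the peripheral endvertex of $P_i$, so there is a degree-one vertex $x_i\in V(H)$ with $\eta(x_iy_i,x_i)=\{v_i\}$, where $y_i$ is the unique neighbor of $x_i$. Then I would trace where each $P_i$ lies relative to the decomposition: starting from $v_i\in \eta(x_iy_i,x_i)$, the path $P_i$ can only leave the strip $\eta(x_iy_i)$ through vertices of $\eta(x_iy_i,y_i)$ and then through $\eta(y_i)$, because property 3 of the decomposition severely restricts which edges of $G$ cross between the $\eta(\cdot)$-sets. The key structural claim is that the "footprint" of each $P_i$ in $H$ (the set of $o\in V(H)\cup E(H)\cup T(H)$ with $\eta(o)\cap V(P_i)\neq\emptyset$) is a connected-ish object emanating from the pendant edge $x_iy_i$, and since the $P_i$ pairwise do not touch, these three footprints are "far apart" in $H$ — in particular, for distinct $i,j$, no vertex/edge/triangle of $H$ carries both a vertex of $P_i$ and a vertex of $P_j$, and moreover there is no edge of $G$ between $V(P_i)$ and $V(P_j)$, which forces the footprints to be genuinely separated and not even adjacent through the complete-bipartite-type adjacencies allowed by property 2.

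\medskip

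Next, I would analyze the particle $A$. By the case analysis over the five particle types (vertex, edge-interior, half-edge, full-edge, triangle), each particle $A$ is built from $\eta(\cdot)$-sets sitting around a single vertex, edge, or triangle of $H$ — call this the "center" $o_A$. The point is that $A$ is contained in a small neighborhood of $o_A$ in $H$, and more importantly, all of $A$ is "reachable" from $o_A$ without passing through any other part of $H$. If $A$ touches $P_i$, I would deduce that the footprint of $P_i$ comes close to $o_A$: either $P_i$ enters $A$ directly, or $P_i$ has an edge of $G$ into $A$, and in both cases property 3 pins down that $P_i$ must reach a specific $\eta(\cdot)$-set adjacent to $o_A$. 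Doing this for all three $i$ and using the non-touching assumption, I would obtain three disjoint, mutually non-adjacent "arms" of $G$ all converging on the small region around $o_A$ — but the complete-to relations (property 2) and the edge constraints (property 3) around a single vertex/edge/triangle of $H$ cannot accommodate three such independent arms without creating an adjacency between two of them, or without one of the $\eta(xy,x)$ sets being forced to interact with two arms, contradicting that $P_i,P_j$ do not touch.

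\medskip

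The main obstacle I anticipate is the bookkeeping in the case analysis on the particle type combined with the position of $o_A$ relative to the three pendant edges $x_iy_i$: there are several sub-cases (is $o_A$ on or near some $x_iy_i$? is it a triangle? does a $P_i$ pass through the "apex" vertices $x$ or $y$ of a half-edge/full-edge particle?), and in each one I must exhibit the forbidden adjacency or touching explicitly. A clean way to organize this is to first prove a lemma stating that for any particle $A$ with center $o_A$ and any induced path $P$ with peripheral endvertex that touches $A$, the path $P$ must contain a vertex in $\eta(o')$ for some $o'$ incident to $o_A$ in $H$ (possibly $o'=o_A$), and that $P$ "uses up" one of the at most two 'connection slots' that $A$ has to the rest of $G$; since $A$ has only two such slots (as in \cref{obs:particlenbrhood}(2), where $N(A_{xy}^{xy})$ is governed by just $N(v_x)\cup N(v_y)$ for two representative vertices), three pairwise non-touching paths cannot all use them, giving the contradiction. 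Verifying the two-slot property uniformly across particle types is the technical heart, but it follows fairly directly from \cref{obs:particlenbrhood} and the partition/adjacency axioms.
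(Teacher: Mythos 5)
The first thing to note is that this paper does not prove \cref{thm:meet} at all: it is imported verbatim as a black box from Chudnovsky et al.~\cite[Lemma 6.8]{DBLP:journals/corr/abs-1907-04585}, so your attempt has to be judged on its own merits rather than against an internal argument. Your general plan (argue by contradiction, trace each $P_i$ from its peripheral endvertex, and analyse how the particle attaches to the rest of $G$) is the natural starting point, but the step you ultimately rest the proof on is not sound as stated.

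The gap is the claimed ``two connection slots'' property. What \cref{obs:particlenbrhood}(2) gives is $N(A^{xy}_{xy}) = (N(v_x)\cup N(v_y))\setminus A^{xy}_{xy}$, equivalently $N(A^{xy}_{xy}) \subseteq \bigcup_{z\in N_H(x),\, z\neq y}\eta(xz,x)\ \cup\ \bigcup_{w\in N_H(y),\, w\neq x}\eta(yw,y)$. This does \emph{not} bound by two the number of pairwise non-touching paths that can meet $N[A^{xy}_{xy}]$: the set $N(v_x)$ may contain many pairwise non-adjacent vertices. The complete-to relations of property 2 only force adjacency between \emph{different} interface sets at the same end ($\eta(xz,x)$ versus $\eta(xz',x)$ for $z\neq z'$); two non-touching paths can both contain vertices of the \emph{same} set $\eta(xz,x)$ (each such vertex is complete to $\eta(xy,x)\subseteq A^{xy}_{xy}$, so both paths do touch the particle), and no local contradiction arises around the ``center'' of the particle. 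This is precisely the hard case of the lemma, and ruling it out requires using the peripheral anchors globally: one must follow each path from its pendant edge $x_iy_i$ of $H$ through the decomposition to its first vertex in $N[A]$ and argue about how these three routes can coexist in $(H,\eta)$, which your sketch defers to ``bookkeeping'' and an assertion that the slot property ``follows fairly directly'' from \cref{obs:particlenbrhood} --- it does not. Two smaller issues: a path can touch the particle from inside (its peripheral endvertex may itself lie in $\eta(xy)$ when $x_iy_i=xy$), a case slot-counting does not naturally cover; and picking $v_x,v_y$ in \cref{obs:particlenbrhood}(2) needs $\eta(xy,x),\eta(xy,y)\neq\emptyset$, which is guaranteed only for rigid decompositions, whereas \cref{thm:meet} is stated for arbitrary extended strip decompositions.
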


\section{Main result}\label{sec:main}
\newcommand{\tinythr}{\ensuremath{t}}
\newcommand{\smallthr}{\ensuremath{t+1}}
\newcommand{\thr}{\ensuremath{3t+1}}
\newcommand{\decr}{\ensuremath{\frac{2}{3}}}
\newcommand{\mult}{11}
\newcommand{\recurse}[1]{\ensuremath{6\log_{3/2}\left(\left|#1\right|\right)}+6}
\newcommand{\maxlen}{\ensuremath{maxlen}}
\newcommand{\lin}{\ensuremath{lin}}
\newcommand{\numb}{6}
\newcommand{\Qq}{\mathcal{Q}}
\newcommand{\Pp}{\mathcal{P}}
\newcommand{\Ss}{\mathcal{S}}
\newcommand{\Sst}{\mathcal{S}_{\ge t}}
\newcommand{\longg}{\mathrm{long}}
\newcommand{\shell}{\mathrm{shell}}
\newcommand{\trim}{\mathrm{pref}}
\newcommand{\trimt}{\mathrm{pref}_{\ge t}}
\newcommand{\elems}{\ensuremath{\bigcup}}
\newcommand{\esd}{extended strip decomposition\xspace}
\newcommand{\wh}[1]{\ensuremath{\widehat{#1}}}

In this section, we prove our main result, i.e., \cref{thm:main}.
Let us first give an~overview of our approach. We present a recursive algorithm that, for a given graph $G$, will return one of the outcomes of \cref{thm:main}.
Let $n\coloneqq|V(G)|$ be the number of vertices in the input graph; the value of $n$ will not change throughout the recursive steps of the algorithm.
We start with finding a Gyárfás path $Q$ navigating towards the largest component in $G$.
That is, by \cref{thm:gyarfas}, we find $Q$ such that each connected component of $G-N[Q]$ is of size at most $\frac{n}{2}$.
Finding such a small connected component is a great outcome as we can readily include it as a small trivial vertex particle of an \esd we are constructing.
We say that a particle is \emph{small} if its size is at most $\frac{n}{2}$, and an \esd is \emph{refined} if all its particles are small.
Observe that if $|Q|\le\thr$, we immediately get the desired refined \esd of $G$.
Otherwise, we proceed to the main part of the algorithm. At each step, we will remove some vertices from $Q$, and will measure the progress of our algorithm in the number of the remaining vertices of $Q$.

Formally, we create a set $\Qq$ of pairwise not touching induced paths such that $\elems{\Qq}\subseteq Q$  and $|\Qq|\le 2$.
At each step of recursion we obtain a set $\hat{\Qq}$ with $|\elems{\hat{\Qq}}|\le \decr |\elems{\Qq}|$ that represents $\Qq$ for the next step.
Hence, in $\mult \log n$ recursive steps, $|\elems{\Qq}|$ drops below \thr.
In the base case of the recursion, when $|\elems{\Qq}|\le\thr$, we return the refined trivial \esd ensured by maintaining the property that $G-N[\elems{\Qq}]$ has connected components of size at most $\frac{n}{2}$ throughout the recursive steps.
In each step of recursion, we further split the induced path(s) in $\Qq$ so we are able to use \cref{thm:three-in-a-tree} to obtain an \esd $(H,\eta)$.
If $(H,\eta)$ is already refined, then we are done. Otherwise, it contains a particle $A$ that is not small. We use \cref{thm:meet} to select at most two paths touching $A$. Then it is easy to separate $A$ with the respective touching paths from the rest of the graph.
The graph induced by $A$ and the touching paths form a smaller instance, i.e., an instance where $|\elems{\Qq}|$ drops by a~factor of $\decr$.
We need to ensure that at every recursive step, we include only a constant number of paths of length $t+1$ into $\Pp$ (i.e., the set of paths in the second outcome of \cref{thm:main}). %
We now prove the core recursive formulation of the algorithm formally.

\begin{lemma}[Recursion]\label{lem:recursion}
  Given a graph $G$ and a set $\Qq$ of at most two induced paths (vertex disjoint non-adjacent), and a refined \esd of $G-N[\elems{\Qq}]$. 
  In polynomial time, we can output one of the following:
  \begin{itemize}
    \item an induced copy of $S_{t, t, t}$ in $G$, or
    \item $\Pp$, $X\subseteq N[\elems\Pp]$, and a refined \esd $(G-X,\eta)$, so that $|\Pp|\le \recurse{\elems{\Qq}}$ and the longest path in $\Pp$ has at most $\smallthr$ vertices.
  \end{itemize}
\end{lemma}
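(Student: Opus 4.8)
The plan is to prove \cref{lem:recursion} by recursion on $|\elems{\Qq}|$, so that the quantity $\recurse{\elems{\Qq}} = 6\log_{3/2}(|\elems{\Qq}|) + 6$ controls the number of paths accumulated into $\Pp$. In the base case, when $|\elems{\Qq}| \le \thr$, the set $\elems{\Qq}$ is itself a union of at most two induced paths each of length at most $3t+1$; I would cut each of them into $\Oh(1)$ subpaths of at most $\smallthr$ vertices, put these into $\Pp$ (checking $|\Pp| \le 6 \le \recurse{\elems{\Qq}}$ since $|\elems{\Qq}|\ge 1$), take $X := N[\elems{\Qq}] = N[\elems\Pp]$, and return the given refined \esd of $G - N[\elems{\Qq}] = G - X$. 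For the recursive step, assume $|\elems{\Qq}| > \thr$. The idea is to split the (at most two) paths in $\Qq$ into smaller induced subpaths so that I obtain a set $Z \subseteq V(G)$ of peripheral-candidate endpoints with $|Z|\ge 2$, then apply \cref{thm:three-in-a-tree} to $(G - N[\elems{\Qq}]$ plus these path-endpoints, or more precisely to an appropriate subgraph) to get either an induced subtree through three elements of $Z$ — which, because the $Z$-elements are endpoints of three pairwise non-touching induced subpaths of length roughly $t$, yields an induced $S_{t,t,t}$ after combining the tree with three length-$t$ path-stubs — or a rigid \esd $(H,\eta)$ of the relevant graph in which each original path-endpoint is peripheral.

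If the \esd $(H,\eta)$ is refined (all particles small), I am done: I set $\Pp$ to be the $\Oh(1)$ subpaths used in the splitting (plus, if needed, the pieces of $\Qq$ that were not touched), $X := N[\elems\Pp]$, and output $(G - X, \eta)$ — here one must check the newly-created \esd is actually an \esd of $G - X$ and that $|\Pp| \le \recurse{\elems{\Qq}}$, which holds with room to spare since we only add a constant number of paths and $|\elems{\Qq}|>\thr$. If $(H,\eta)$ is not refined, it has a non-small particle $A$ with $|A| > n/2$. By \cref{thm:meet}, since the peripheral endpoints come from pairwise non-touching induced paths, at most two of those paths touch $A$; call them (say) $Q_1, Q_2$, and note the remaining structure of $\elems{\Qq}$ not involved with $A$ can be moved into $\Pp$. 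The key separation claim is that $N(A)$ is "captured" by $N[\{v_x\} \cup \{v_y\}]$-type sets (\cref{obs:particlenbrhood}) together with the at most two touching paths, so that after deleting $N[\,\text{those paths}\,]$ the particle $A$ becomes a union of connected components of the remaining graph. This lets me recurse on $G[A']$ for an appropriate $A' \supseteq A$ (the particle together with the touching path stubs) with a new $\Qq' := \{Q_1, Q_2\}$ satisfying $|\elems{\Qq'}| \le \decr |\elems{\Qq}|$, and a refined \esd of $G[A'] - N[\elems{\Qq'}]$ obtained by restricting $(H,\eta)$ (or by re-deriving one from the component structure).

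The bookkeeping then goes: the recursive call returns $\Pp'$ with $|\Pp'| \le \recurse{\elems{\Qq'}} \le 6\log_{3/2}(\decr|\elems{\Qq}|) + 6 = \recurse{\elems{\Qq}} - 6$, and we add at most $6$ new paths (the $\Oh(1)$ splitting pieces, each of length $\le \smallthr$) for the step, so $|\Pp| = |\Pp'| + 6 \le \recurse{\elems{\Qq}}$ as required; the longest-path bound is maintained because every path we ever add to $\Pp$ is a subpath of an original $Q$-piece cut to at most $\smallthr$ vertices. One also has to argue the final \esd returned is refined as an \esd of $G - X$: the particles of the recursively returned \esd are small relative to $|A'| \le n$ — but we actually need "at most $n/2$", which holds because inside the recursion the threshold is still $n/2$ (the value of $n$ never changes), and outside the recursion $A$ itself was the only threat and it is now handled inside. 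The main obstacle I expect is the separation step: verifying precisely that a single non-small particle $A$, after removing the closed neighborhoods of the (at most two) paths touching it, together with a bounded number of extra split points, cleanly separates from the rest of $G$ so that one obtains a \emph{valid} refined \esd of the smaller instance — this requires careful use of the adjacency constraints in the definition of an \esd (property 3) and of \cref{obs:particlenbrhood}(2), and is where the induced-path splitting must be set up just right so that \cref{thm:three-in-a-tree} and \cref{thm:meet} can both be invoked. A secondary subtlety is ensuring that when \cref{thm:three-in-a-tree} returns a subtree, the three path-stubs attached at the $Z$-vertices genuinely extend it to an induced $S_{t,t,t}$ (no unwanted chords), which is why the split subpaths must have length about $t$ and be pairwise anticomplete.
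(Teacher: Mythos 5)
Your overall architecture matches the paper's: split the longest path of $\Qq$ into thirds, run \cref{thm:three-in-a-tree} on a set of path endpoints to either extract an $S_{t,t,t}$ or obtain an extended strip decomposition, use \cref{thm:meet} to find the at most two pieces touching a non-small particle $A$, recurse on $A$ together with those pieces, and do the ``at most $6$ new paths per level, $\log_{3/2}$ levels'' bookkeeping with the $\decr$ decrease. However, the two places you flag as obstacles are genuine gaps, and in both the missing idea is the paper's $\trim$/$\shell$ device. First, the precondition of the recursive call: you propose to obtain the refined \esd of the smaller instance ``by restricting $(H,\eta)$'', but $(H,\eta)$ is the decomposition just produced by \cref{thm:three-in-a-tree}, which is \emph{not} refined (it contains the huge particle $A$), and restricting it to $A$ plus the two touching pieces minus $N[\elems{\hat\Qq}]$ does not make $A$ small. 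The correct source is the refined \esd of $G-N[\elems\Qq]$ given as input, restricted to $\hat G=A\cup P_1\cup P_2$; to use it one must prove $\hat G - N[\elems\Qq]=\hat G-N[\elems{\hat\Qq}]$, and that is exactly what the shells are engineered for: three-in-a-tree is applied to $G-\shell(\Ss)$, where $\shell(P)=N[\trim(P)]\setminus\bigcup\Ss$ deletes the neighborhoods of the $(t-1)$-vertex prefixes and of the two separator vertices (and all of $N[\trim(Q_2)]$ when $Q_2$ is short), so every vertex of $\hat G$ that is adjacent to $\elems\Qq$ but not to $\elems{\hat\Qq}$ is already gone. Your proposal never sets this up, and without it the induction hypothesis cannot be fed to the recursive call.

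Second, the ``appropriate subgraph'' and the contents of $\Pp$ and $X$ cannot be left vague. You may not delete anything like $N[\elems\Qq]$, nor set $X:=N[\elems\Pp]$ with $\Pp$ containing ``the pieces of $\Qq$ that were not touched'': the pieces $Q_1^i$ have about $|Q_1|/3$ vertices, far more than the permitted $\smallthr$, and $\Pp$ may only contain short paths, so the long remainders of the paths must stay inside the graph on which the \esd lives. The paper resolves this by deleting only $\shell(\Ss)$, putting only the prefixes $\trim(\Ss)$ (at most four paths on at most $\smallthr$ vertices) plus the two singletons $\{v_x\},\{v_y\}$ into $\Pp$, and exploiting that the lemma only demands $X\subseteq N[\elems\Pp]$, not equality; moreover $v_x,v_y$ must actually be placed in $\Pp$ so that the separator $X'=(N(v_x)\cup N(v_y))\setminus A$ from \cref{obs:particlenbrhood} is contained in $N[\elems\Pp]$, which your write-up never does. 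Finally, the $\decr$ bound is not automatic: it needs splitting only the \emph{longest} path into thirds while keeping $Q_2$ whole, with the estimate $a+\frac{1-a}{3}\le\frac{2}{3}$ for $a=|Q_2|/|\elems\Qq|\le\frac{1}{2}$ (splitting $Q_2$ as well could produce pieces shorter than $t$ and break the $S_{t,t,t}$ extraction). So the skeleton is right, but the steps you defer are precisely where the proof's content lies.
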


\begin{proof}
  If the longest path of $\Qq$ has at most $\thr$ vertices, return $\Pp\coloneqq\Qq$ where each path in $\Pp$ may be further split in at most three paths on at most $\smallthr$ vertices, and $X\coloneqq N[\elems\Pp]$. 
Hence, we output the \esd we were given by the assumptions of the lemma.

Otherwise, let $Q_1$ be the longest path in $\Qq$.
Let $u_1$ and $u_2$ be the $\left(\left\lfloor \frac{|Q_1|}{3} \right\rfloor + 1\right)$-th and the $\left(2\left\lfloor \frac{|Q_1|}{3} \right\rfloor + 2\right)$-th vertex of $Q_1$, respectively.
The removal of $u_1$ and $u_2$ from $Q_1$ divides the path into three induced non-touching subpaths $Q_1^1$, $Q_1^2$, and $Q_1^3$, each of length at least $t$.
Let $Q_2$ be the remaining path of $\Qq$, should it exist.
We define $\Ss \coloneqq \{Q_1^1, Q_1^2, Q_1^3, Q_2\}$ if $Q_2$ exists, or $\Ss \coloneqq \{Q_1^1, Q_1^2, Q_1^3\}$, otherwise.
Consult \cref{fig:shells} to see an overview of the definitions described in this paragraph.
For each path $P \in \Ss$ we define $\trim(P)$ as the set comprising:
\begin{itemize}
 \item first $t - 1$ vertices of $P$ (or all vertices of $P$ if $|P|<t-1$), and
 \item the separating vertex of $Q_1$ directly preceding $P$ if $P \in \{Q_1^2, Q_1^3\}$. %
\end{itemize}
It can be easily seen that the set of vertices $\trim(P)$ forms an induced path of length at most $t$.
We finally define \emph{shells} of paths in $\Ss$. 
Given a~path $P \in \Ss$, we set $\shell(P) \coloneqq N[\trim(P)] \setminus \bigcup \Ss$ if $|P|\ge t$ and $\shell(P) \coloneqq N[\trim(P)]$ otherwise.
Intuitively, if $|P|< t$, the shell of $P$ takes the whole neighborhood as we do not have a use for a short path in the next stage of our algorithm.
For a long enough path $P$, the shell of $P$ intersects all short paths connecting the first vertex of $P$ with the rest of the graph.
Thus, each path from the first vertex of $P$ to any vertex of $G-\shell(P)$ outside of $P$ will have length at least $t$.
To ease the notation, we define $\Sst\coloneqq \{P\in\Ss\mid |P|\ge t\}$, $\shell(\Ss) \coloneqq \bigcup_{P \in \Ss} \shell(P)$, and $\trim(\Ss) \coloneqq \bigcup_{P \in \Ss} \trim(P)$.

\begin{figure}[htb]
\includegraphics[scale=1,page=2]{figs}
\caption{Definitions of $\trim(\Ss)$ and $\shell(\Ss)$ in case of $|Q_2|\ge t$.} \label{fig:shells}
\end{figure}

Now, we use the algorithm from \cref{thm:three-in-a-tree} on $Z$ being the set of the first vertices of paths in $\Sst$ and the graph defined as $G-\shell(\Ss)$. 
If \cref{thm:three-in-a-tree} produced an induced tree with three leaves among $Z$, we return it as an induced $S_{t,t,t}$, since those must have been induced branches at least $t$ vertices long in $G-\shell(\Ss)$.
Hence, we obtained an \esd $(H',\eta')$ of $G-\shell(\Ss)$.
If the obtained decomposition is refined, we return $\Pp\coloneqq\trim(\Ss)$, $X\coloneqq\shell(\Ss)$, and the \esd $(H\coloneqq H',\eta\coloneqq\eta')$.

Therefore, the obtained \esd $(H',\eta')$ of $G-\shell(\Ss)$ contains a~particle $A$ which is not small, i.e., $A$ is composed of at least $\frac{n}{2}$ vertices.
As $Z$ is peripheral, we know that no three paths in $\Sst$ touch one particle by \cref{thm:meet}.
Therefore, we take the set $\hat\Qq$ of at most two paths, say $P_1$ and $P_2$, touching $A$ (for convenience, let $P_1$ or $P_2$ be an~empty set if it does not exist).
We now compute the maximum proportion of $\elems\Qq$ put to $\hat\Qq$.
If both $P_1,P_2\subseteq Q_1$, then this fraction is at most $\decr$ as by the definition $|Q_1^i|\le\frac{|Q_1|}{3}$, for $i\in\{1,2,3\}$.
If one is $Q_2$ and the other comes from $Q_1$, then we estimate $a+\frac{1-a}{3}=\frac{2a+1}{3}\le \decr$ for $a={|Q_2|}/{|\elems\Qq|}\le\frac{1}{2}$.
Hence, we know that $|\elems\hat\Qq|\le\decr |\elems\Qq|$.
We define $\hat{G}\coloneqq A\cup P_1\cup P_2$ to use \cref{lem:recursion} on a smaller instance.
Now, we need to verify that the assumption of the lemma holds.
We claim the following:

\begin{claim}
 $\hat{G}-N[\elems{\hat{\Qq}}]$ has a refined \esd.
\end{claim}

\begin{claimproof}
As $\hat{G}$ is an induced subgraph of $G$ and $G-N[\elems{\Qq}]$ has a refined \esd, we know that $\hat{G}-N[\elems{\Qq}]$ has a refined \esd.
 First, recall that $N[u_1]\setminus(Q_1^1\cup Q_1^2)\subseteq \shell(Q_1^2)$, which is disjoint with $V(\hat{G})$.
 Analogously $N[u_2]\setminus(Q_1^2\cup Q_1^3)$ is disjoint with $V(\hat{G})$.
 Also, if $|Q_2|<t$ then $Q_2$ is disjoint with $V(\hat{G})$ as well.
 Hence, $\hat{G}-N[\elems{\Qq}]\simeq \hat{G}-N[\elems{\Sst}]$.
 Also, recall that the only paths among $\Sst$ that touch $A$ are in $\hat{\Qq}$.
  Hence, observe that $\hat{G}-N[\elems{\Sst}]\simeq \hat{G}-N[\elems{\hat{\Qq}}]$. 
\end{claimproof}

Therefore, we can apply \cref{lem:recursion} inductively on $\hat{G}$ and $\hat{\Qq}$, obtaining $\hat{\Pp}$ and $\hat{X}$, and a~refined \esd~$(\hat{H},\hat{\eta})$ of $\hat{G}-\hat{X}$.
We need to combine the \esd obtained from the recursion with the \esd $(H',\eta')$ we obtained earlier.

We can always suppose that particle $A$ is of type $A^{xy}_{xy}$ for some edge $xy\in E(H')$, unless $A$ is of type $A_x$ for an isolated vertex $x\in V(H')$.
That is because  $A^{xy}_{xy}$ is the superset of all possible particle types.
As \cref{thm:three-in-a-tree} gives us that both $\eta'(xy,x)$ and $\eta'(xy,y)$ are nonempty, we can select $v_x\in\eta'(xy,x)$ and $v_y\in\eta'(xy,y)$ (possibly $v_x=v_y$).
By~\cref{obs:particlenbrhood}, the set \[X'\coloneqq\left(N(v_y)\cup N(v_x)\right)\setminus V(A)\] separates $A$ from the rest of $G$.  
Set $\Pp'\coloneqq\left\{\{v_x\},\{v_y\}\right\}$.
In the case of $A_x$ such that $x\in V(H)$ is an independent vertex, we set $\Pp'\coloneqq\emptyset$ and $X'\coloneqq\emptyset$ and still such $A$ is separated from the rest of $G$ by $X'$.  
We return:
\begin{itemize}
  \item $\Pp\coloneqq\hat{\Pp}\cup \Pp' \cup \trim(\Ss)$,
  \item $X\coloneqq\hat{X}\cup X'\cup \shell(\Ss)$,
  \item an~\esd $(H,\eta)$ of $G-X$, where $H$ is $\hat{H}$ with an additional isolated vertex $w$, and $\eta$ is $\hat{\eta}$ with an additional trivial vertex particle $\eta(w)$ containing all vertices of $G-X-A$.
\end{itemize}

We compute that $|\Pp|\le\numb + \recurse{\elems{\hat{\Qq}}} \le \recurse{\elems{\Qq}}$ as we added at most six new paths into $\Pp$.
Observe that the described algorithm runs in polynomial time as we just computed that the depth of recurrence is logarithmic in $|\elems(\Qq)|\le |V(G)|$ and each recursive call takes polynomial time in the size of $G$.

\end{proof}

\begin{proof}[Proof of \cref{thm:main}]
  Using \cref{thm:gyarfas} we find a Gyárfás path $Q$.
  We get the desired outcome by \cref{lem:recursion} on $G$ with $\Qq\coloneqq\{Q\}$.
  The \esd needed by the lemma's assumption is trivial.
  That is, each connected component of $G-Q$ is represented by a vertex particle of small size.
We conclude the proof of \cref{thm:main} by the following calculation: 
\[
  6 \log_{3/2} n + 6  \le 11 \log n+6.
\]

Note that for any \esd $(H,\eta)$ we can easily add the assumption that sets $\eta(xy,x)\neq \emptyset$ for any edge $xy\in E(H)$. 
As suppose $\eta(xy,x)=\emptyset$; then we can update $(H,\eta)$ by adding $\eta(xy)$ to $\eta(y)$ and removing $xy$ from $H$.
Moreover, we can simply remove any empty trivial vertex particle form $\eta$ and the corresponding isolated vertex from $H$.
Therefore, we may suppose that the obtained \esd is rigid.
\end{proof}

In the following simple corollary we apply \cref{thm:main} to $s S_{t,t,t}$-free graphs, for some $s,t\ge 1$.

\sv{\begin{restatable}[$\spadesuit$]{corollary}{forestFree}\label{cor:forestfree}}
\lv{\begin{restatable}{corollary}{forestFree}\label{cor:forestfree}}
Let $s\geq 1, t\geq 1$ be constants. %
Let $G$ be an $s S_{t,t,t}$-free graph on $n$ vertices.
Then in polynomial time we can find a set $X$ consisting of at most $(s-1)(3t+1) + (11 \log n + 6)(t+1)$ vertices
and a rigid extended strip decomposition of $G - N[X]$ whose every particle has at most $n/2$ vertices.
\end{restatable}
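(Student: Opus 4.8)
The plan is to derive Corollary~\ref{cor:forestfree} from Theorem~\ref{thm:main} by iterating the theorem at most $s-1$ times, each time peeling off a bounded number of vertices until the remaining graph is $S_{t,t,t}$-free, at which point one more application of Theorem~\ref{thm:main} must take the second outcome. Concretely, set $G_0 \coloneqq G$ and repeat: run the algorithm of Theorem~\ref{thm:main} on the current graph $G_i$. If it returns an extended strip decomposition with all particles of size at most $n/2$ (note $|V(G_i)| \le n$, so ``at most $|V(G_i)|/2$'' is at most $n/2$), we stop. If instead it returns an induced copy of $S_{t,t,t}$ in $G_i$, we delete its vertex set from $G_i$ to form $G_{i+1}$; this copy has exactly $3t+1$ vertices, so $|V(G_{i+1})| = |V(G_i)| - (3t+1)$.

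The key observation is that this loop can execute the ``delete a copy of $S_{t,t,t}$'' branch at most $s-1$ times: if we had deleted $s-1$ pairwise vertex-disjoint induced copies $D_1, \dots, D_{s-1}$ of $S_{t,t,t}$ from $G$ and the $(s-1)$-th application still found an induced $S_{t,t,t}$, call it $D_s$, in $G_{s-1} = G - (D_1 \cup \dots \cup D_{s-1})$, then $D_1 \cup \dots \cup D_s$ would be an induced $sS_{t,t,t}$ in $G$ — these copies are pairwise disjoint and, being in an induced subgraph obtained by vertex deletions, there are no edges between $D_s$ and the earlier copies, and likewise no edges between any $D_i$ and $D_j$ for $i<j<s$ since each $D_j$ lives in a graph with the earlier copies already removed. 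This contradicts $sS_{t,t,t}$-freeness. Hence after at most $s-1$ deletions we reach a graph $G_k$ (with $k \le s-1$) on which Theorem~\ref{thm:main} returns the second outcome: a set $\mathcal{P}$ of at most $11\log n + 6$ induced paths, each on at most $t+1$ vertices, and a rigid extended strip decomposition of $G_k - N_{G_k}[\bigcup_{P\in\mathcal{P}} V(P)]$ with every particle of size at most $n/2$ (using $|V(G_k)| \le n$ again).

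It remains to package this into a single deleted set $X$. Let $D \coloneqq \bigcup_{i=1}^{k} V(D_i)$ be the union of the deleted $S_{t,t,t}$-copies, so $|D| \le (s-1)(3t+1)$, and let $X \coloneqq D \cup \bigcup_{P \in \mathcal{P}} V(P)$. Then $|X| \le (s-1)(3t+1) + (11\log n + 6)(t+1)$, as required. The extended strip decomposition returned is of $G_k - N_{G_k}[\bigcup_{P\in\mathcal{P}}V(P)]$; since $G_k = G - D$ and the paths $P$ lie in $G_k$, we have $G_k - N_{G_k}[\bigcup_{P}V(P)] = G - D - N_{G}[\bigcup_{P}V(P)] = G - N_G[X]$ (neighborhoods taken in $G$ only add to the deleted set, and $D \subseteq X$). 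So the same pair $(H,\eta)$ is a rigid extended strip decomposition of $G - N[X]$, and its particles have size at most $n/2$.

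I do not expect a genuine obstacle here; the argument is a clean induction on $s$. The one point needing a little care is the bookkeeping showing that the target graph of the final extended strip decomposition really equals $G - N[X]$ rather than some graph computed relative to the intermediate $G_k$ — this is where the fact that we always delete whole \emph{closed} neighborhoods in the final step, but only the \emph{vertex sets} of the $S_{t,t,t}$-copies in the intermediate steps, must be reconciled; since $D$ itself is absorbed into $X$ this causes no loss, but it is worth stating explicitly. A second minor point is to confirm that ``every particle has at most $n/2$ vertices'' survives the identification: particles are subsets of $V(G_k) \subseteq V(G)$, and Theorem~\ref{thm:main} guarantees each has at most $|V(G_k)|/2 \le n/2$ vertices, so the bound stated with the global $n$ is correct.
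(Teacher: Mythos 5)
There is a genuine gap, and it sits exactly where the paper's proof does something different. You delete only the vertex set $V(D_i)$ of each found copy and then argue that $s$ pairwise vertex-disjoint copies $D_1,\dots,D_s$ would form an induced $sS_{t,t,t}$ in $G$ ``since there are no edges between $D_s$ and the earlier copies.'' That is false: deleting $V(D_i)$ from the working graph does not delete the edges of $G$ between $D_i$ and the surviving vertices, so a copy $D_j$ found later may be adjacent in $G$ to $D_1,\dots,D_{j-1}$. An induced $sS_{t,t,t}$ requires the $s$ copies to be pairwise \emph{anticomplete}, not merely vertex-disjoint; an $sS_{t,t,t}$-free graph can contain many pairwise disjoint induced copies of $S_{t,t,t}$ (e.g.\ disjoint copies joined completely to one another). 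Consequently your loop is not guaranteed to stop after $s-1$ deletions, and the claimed bound $|X|\le (s-1)(3t+1)+(11\log n+6)(t+1)$ is not justified. The paper avoids this by removing the whole closed neighborhood $N[Y]$ of the found copy $Y$: then any induced $(s-1)S_{t,t,t}$ in $G-N[Y]$ is automatically anticomplete to $Y$, so $G-N[Y]$ really is $(s-1)S_{t,t,t}$-free and the induction on $s$ goes through.

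Deleting $N[Y]$ also repairs your second, related problem: with $X=D\cup\bigcup_{P\in\mathcal{P}}V(P)$ and only $D$ (not $N[D]$) removed along the way, the graph your decomposition lives on is $G-D-N_G[\bigcup_P V(P)]$, which strictly contains $G-N_G[X]$ whenever some neighbor of $D$ avoids $D\cup N_G[\bigcup_P V(P)]$; so the returned $(H,\eta)$ is not a decomposition of $G-N[X]$ as the statement requires (one could restrict it, but then rigidity needs re-establishing, and in any case your claimed equality is wrong). In the paper's version one has $G'=G-N[Y]$, $X=Y\cup X'$ with $X'\subseteq V(G')$, and then $G'-N_{G'}[X']=G-N_G[X]$ holds exactly, so the decomposition and the size bound (using $|V(G')|\le n$) transfer verbatim.
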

\toappendix{%
  \sv{\forestFree*}
  \begin{proof}
Induction on $s$. If $s=1$, then we obtain the result immediately by \cref{thm:main}.
Thus let us assume that $s \geq 2$ and the theorem holds for $(s-1)S_{t,t,t}$-free graphs.

We exhaustively check if there is some $Y \subseteq V(G)$ with $|Y| = 3t+1$, such that $G[Y] \simeq S_{t,t,t}$; we can do it in time $n^{3(t+1) + \Oh(1)}=n^{\Oh(1)}$.
If such $Y$ does not exist, then we can immediately apply \cref{thm:main}, and the proof is complete.
Thus suppose that $Y$ exists.

We observe that the graph $G':=G-N[Y]$ is $(s-1)S_{t,t,t}$-free. Denote $n' := |V(G')|$.
By the inductive assumption, in time $(n')^{\Oh(1)}=n^{\Oh(1)}$
we can obtain a set $X' \subseteq V(G')$ of size at most $(s-2)(3t+1) + (11 \log n' + 6)(t+1)$ and a rigid extended strip decomposition $(H,\eta)$ of $G'-N[X']$ whose every particle is of size at most $n'/2$.

We set $X = Y \cup X'$. Now $X$ and $(H,\eta)$ satisfy the statement of the theorem, as $G' - N[X'] = G - N[X]$ and $n' \leq n$. The total running time is polynomial in $n$ as the depth of the recursion is $s-1$.
\end{proof}
}

\section{Algorithmic applications}\label{sec:algo}
In this section we will show how to combine \cref{thm:main} with the approach of Chudnovsky et al.~\cite{DBLP:journals/corr/abs-1907-04585, DBLP:conf/soda/ChudnovskyPPT20} in order to obtain a QPTAS and a subexponential-time algorithm for MWIS in $S_{t,t,t}$-free graphs, i.e., we prove Theorems~\ref{thm:subexp} and~\ref{thm:qptas}.

Both algorithms follow the same general outline; let us sketch it before we get into the details of each particular case.
Each algorithm is a recursive procedure, which consists of two phases.
In the first one, we deal with the vertices of $G$ that are \emph{heavy}, which means that their neighborhood is ``large'',
where the exact meaning of ``large'' depends on the particular algorithm.

Once there are no heavy vertices, i.e., the neighborhood of each vertex is ``small'', we proceed to the second phase.
We call \cref{cor:forestfree} for the current instance $G$, obtaining a~small-sized set $X$
and a rigid extended strip decomposition $(H,\eta)$ of $G-N[X]$, whose every particle is of small size.
The crux is that since we are in the second phase, all vertices in $X$ are not heavy,
and since $X$ is of small size, the whole set $N[X]$ is ``small''.
We treat $N[X]$ separately in a way that depends on the particular algorithm.

Next, for each particle $A$ of $(H,\eta)$, we call the algorithm recursively for $G[A]$,
obtaining (a good approximation of) a maximum-weight independent set in $G[A]$.
Finally, we combine the obtained results to derive (a good approximation of) a maximum-weight independent set in $G$.
This last step is based on the idea of Chudnovsky et al.~\cite{DBLP:journals/corr/abs-1907-04585, DBLP:conf/soda/ChudnovskyPPT20} to reduce the problem to finding
a maximum-weight matching in a graph obtained by a simple modification of $H$.
Since the size of $H$ is linear in $|V(G)|$ (by~\cref{obs:h-bounds}), this problem can be solved in time polynomial in $|V(G)|$
using, e.g., the classic algorithm of Edmonds~\cite{Edmonds}.
The last step is encapsulated in the following lemma, whose exact statement comes from Abrishami et al.~\cite{AbrishamiCPRS21-arxiv}.%

\begin{lemma}[Chudnovsky et al.~\cite{DBLP:journals/corr/abs-1907-04585, DBLP:conf/soda/ChudnovskyPPT20}]
\label{lem:reduction-matching}
Let $\varsigma \in [0,1]$ be a real number.
Let $G$ be an $n$-vertex graph equipped with a weight function $\wei: V(G) \to \Z$. Suppose that $G$ is given along with an extended strip decomposition $(H, \eta)$, where $H$ has $N$ vertices. 

\noindent Let $I_0 \subseteq V(G)$ be a fixed independent set in $G$.
Furthermore, assume that for each particle $A$ of $(H, \eta)$ we are given an independent set $I(A)$ in $G[A]$ such that $\wei(I(A))\geq\varsigma \cdot \wei(I_0 \cap A)$.  
Then in time polynomial in $n+N$ we can compute an independent set $I$ in $G$ such that $\wei(I) \geq \varsigma \cdot \wei(I_0)$.
\end{lemma}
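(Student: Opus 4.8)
The plan is to follow the reduction of Chudnovsky et al.\ and reduce the problem of combining the per-particle solutions into a maximum-weight matching problem on an auxiliary graph built from $H$. First I would recall the structure: the vertex set $V(G)$ is partitioned by the sets $\eta(o)$ for $o \in V(H) \cup E(H) \cup T(H)$, and an independent set of $G$ can only ``use'' vertices in a controlled way across these sets, because of the completeness constraints in the definition of an extended strip decomposition. Concretely, for a fixed edge $xy \in E(H)$, the sets $\eta(xy,x)$ over $y \in N_H(x)$ are pairwise complete at $x$, so an independent set can pick from $\eta(xy,x)$ for at most one neighbor $y$ of $x$; this is exactly what makes the interaction pattern look like a matching in $H$.

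The key steps, in order, would be: (i)~For each edge $xy \in E(H)$, prepare, using the given per-particle independent sets $I(A)$, a candidate independent set for each of the relevant particles $A_{xy}^{\perp}, A_{xy}^{x}, A_{xy}^{y}, A_{xy}^{xy}$, together with their weights; similarly use $I(A_x)$ for vertex particles and $I(A_{xyz})$ for triangle particles. (ii)~Build an auxiliary graph $\widehat H$ on (essentially) $V(H)$, with the edges of $H$ present, plus for each $xy \in E(H)$ a pendant/loop-simulating gadget that records the weight of the best independent set obtainable from inside that strip when the edge $xy$ is or is not selected in the matching; assign each edge/gadget a weight equal to the corresponding particle solution's weight. (iii)~Observe that a maximum-weight matching $M$ in $\widehat H$ encodes a globally consistent choice: every vertex $x \in V(H)$ is matched along at most one incident edge, which tells us from which strip we are allowed to take the ``$x$-side'' vertices; the union of the chosen per-particle independent sets across all $o$, restricted according to $M$, is then genuinely an independent set $I$ in $G$, because all the potential conflicting edges listed in property~3 of the definition are killed by the matching constraint. (iv)~For the weight bound, apply $M$ to the fixed independent set $I_0$ itself: $I_0$ induces a feasible matching $M_0$ in $\widehat H$ (its interaction pattern is a matching in $H$ by the completeness property), and for each particle $A$ selected by $M_0$ we have $\wei(I(A)) \ge \varsigma\cdot\wei(I_0\cap A)$ by hypothesis; summing over the particles covering $V(G)$ gives $\wei(M_0\text{-solution}) \ge \varsigma\cdot\wei(I_0)$, and since $M$ is a maximum-weight matching, the independent set $I$ we extract has $\wei(I) \ge \wei(M_0\text{-solution}) \ge \varsigma\cdot\wei(I_0)$. (v)~Bound the running time: $\widehat H$ has $O(N)$ vertices and $O(N)$ edges, its construction and the extraction of $I$ from $M$ are polynomial in $n + N$, and a maximum-weight matching is computed by Edmonds'~\cite{Edmonds} algorithm in time polynomial in $N$.

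The main obstacle is step~(iii)/(iv): setting up the gadgets in $\widehat H$ so that (a)~every matching corresponds to a genuinely independent set in $G$ --- in particular handling the triangle particles $\eta(xyz)$ and the ``both-ends'' particles $A_{xy}^{xy}$ correctly, since a vertex in $\eta(xyz)$ is adjacent to $\eta(xy,x)\cap\eta(xy,y)$ and this must be reflected by forbidding the simultaneous selection of conflicting options --- and (b)~conversely, that the optimal-over-$I_0$ matching $M_0$ is actually realizable in the gadget, so the weight guarantee is not lost. This bookkeeping is routine but delicate; since the precise gadget construction and its correctness are exactly Lemma~6.8 and the surrounding argument of~\cite{DBLP:journals/corr/abs-1907-04585} (and restated in~\cite{AbrishamiCPRS21-arxiv}), I would cite that construction and only verify that our hypotheses --- an arbitrary extended strip decomposition, not necessarily rigid, with the promised per-particle approximation ratio $\varsigma$ --- meet its requirements, rather than reproving it from scratch.
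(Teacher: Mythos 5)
The paper does not reprove this lemma; it is stated as a citation to Chudnovsky et al.\ (with the exact formulation taken from Abrishami et al.), so there is no in-paper proof to compare against. Your sketch of the reduction to maximum-weight matching on an auxiliary graph built from $H$, together with the two-directional argument (a matching yields a valid independent set, and $I_0$ induces a feasible matching certifying the $\varsigma$-bound), is the approach taken in those cited works, and your decision to defer the gadget bookkeeping to the original source rather than reprove it mirrors what the paper itself does.
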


Let us stress out that the algorithm from \cref{lem:reduction-matching} does not need to know the value of $\varsigma$ or the independent set $I_0$.

The main difference between our approach and the one of Chudnovsky et al.~\cite{DBLP:conf/soda/ChudnovskyPPT20} is that we use \cref{thm:main} and its consequence, i.e., \cref{cor:forestfree}. The previous algorithms used a~similar statement but with a worse (and much more involved) guarantee on the size of $X$ and each particle. Furthermore, the way we obtain our set $X$ is significantly simpler.

\subsection{Proof of \cref{thm:subexp}}
Before we proceed to the proof, let us first explain the meaning of ``small'', and how to deal with $N[X]$ in this particular case.
Here the neighborhood of a vertex is ``small'' if it has few vertices (more specifically, at most $\sqrt{n/t}$).
In the first phase, we deal with heavy vertices $v$ (i.e., of large degree) with simple branching: we guess whether $v$ is included in our optimum solution or not. Since the degree of $v$ is large, in the first branch, we obtain significant progress, which is enough to obtain a subexponential running time.

In the second phase, since $N[X]$ is the neighborhood of $\Oh(\log n)$ vertices, each of degree $\Oh(\sqrt{n})$,
the total size of $N[X]$ is $\Oh(\sqrt{n} \log n)$.
Thus we can afford to exhaustively guess the intersection of our optimum solution with $N[X]$.

\begin{proof}[Proof of \cref{thm:subexp}]
Let $s,t \geq 1$ be constants and let $(G,\wei)$ be an instance of MWIS, where $G$ is $sS_{t,t,t}$-free and has $n$ vertices.
We observe that if $n$ is small, i.e., bounded by a constant. Then we can solve the problem by brute force.
Thus we assume that $n \geq n_0$, where $n_0$ is a constant (depending on $s$ and $t$) whose exact value follows from the reasoning below.

First, consider the case that there exists $v \in V(G)$ such that $\deg v  \geq \sqrt{n/t}$.
We branch on including $v$ in the final solution: we either delete $v$ from $G$, or we delete $N[v]$ and add $v$ to the solution returned by the recursive call.
Then we output the one of these two solutions that has a larger weight.
The correctness of this step of the algorithm is straightforward.

Hence, we can assume that for every $v \in V(G)$ it holds that $\deg v \leq \sqrt{n/t}$. 
By \cref{cor:forestfree}, since $G$ is $sS_{t,t,t}$-free, we obtain a set $X$ of size $(s-1)(3t+1) + (11\log{n}+6)(t+1) \leq 12(t+1)\log n$ (here we use that $n$ is large),
and a rigid extended strip decomposition $(H, \eta)$ of $G' = G - N[X]$ whose every particle has at most $n/2$ vertices.

We exhaustively guess an independent set $J \subseteq N[X]$; think of it as an intersection of the intended optimum solution with $N[X]$.
Consider the graph $G'' := G' - N[J]$.
We modify $(H,\eta)$ by removing the vertices from $N[J]$ from the sets $\eta(\cdot)$.
Let us call the obtained strip decomposition $(H,\eta')$; note that it might not be rigid.
We call the algorithm recursively for the subgraph $G''[A]$ for every nonempty particle $A$ of $(H,\eta')$.
Let $I(A)$ be the solution. If $A = \emptyset$, then $I(A) = \emptyset$.
By the inductive assumption $I(A)$ is a maximum-weight independent set in $G''[A]$.
Then we use \cref{lem:reduction-matching} for $\varsigma=1$ to combine the solutions into a maximum-weight independent set $I_J$ of $G''$.
Finally, we return the independent set $J \cup I_J$ whose weight is maximum over all choices of $J$.
Note that the correctness of this step is guaranteed by the exhaustive guessing of $J$ and \cref{lem:reduction-matching}.

\subparagraph*{Running time.} Let $F(n)$ denote the running time of our algorithm for $n$-vertex instances.
We prove that $F(n) = 2^{\Oh\left(\sqrt{tn}\log{n}\right)}$.
If $n < n_0$, then the claim clearly holds. So let us assume that $n \geq n_0$.

In the first case we call the algorithm for two instances, one of size $n-1$ and one of size at most $n-\sqrt{n/t}$.
Hence,
\[F(n)\leq F(n-1)+F(n-\sqrt{n/t}) = 2^{\Oh\left(\frac{n \log{n}}{\sqrt{n/t} }\right)} \leq 2^{\Oh\left(\sqrt{tn}\log{n}\right)}.\]
Here we skip the description how this recursion is solved, as it it pretty standard. For a~formal proof we refer the reader to Bacs\'o et al.~\cite[Lemma~1]{DBLP:journals/algorithmica/BacsoLMPTL19}.

It remains to analyze the running time of the step in which the maximum degree of vertices in $G$ is bounded by $\sqrt{n/t}$. 
\cref{cor:forestfree} asserts that we obtain $X$ and the rigid extended strip decomposition $(H, \eta)$ of $G'=G \setminus N[X]$ in time polynomial in $n$.
There are $2^{\Oh(\sqrt{n/t} \cdot t\log{n})}=2^{\Oh(\sqrt{nt}\log{n})}$ ways of choosing the set $J$.
In polynomial time we modify $(H,\eta)$ into $(H,\eta')$.

Observe that while $(H,\eta')$ might not be rigid, it was obtained from a rigid extended strip decomposition $(H,\eta)$ by deleting some vertices from the sets $\eta(\cdot)$. In particular, both decompositions have the same sets of particles, 
and every nonempty particle of $(H,\eta')$ is also a nonempty particle of $(H,\eta)$.
Thus by \cref{obs:particle-bound} we call the algorithm recursively for at most $4n$ nonempty particles,
each of size at most $n/2$. 
By \cref{obs:h-bounds}, the total number of particles of $(H,\eta')$ is polynomial in $n$.
Finally, having computed a maximum independent set contained in each particle, by \cref{lem:reduction-matching}, we can compute the final solution in time polynomial in $n$.
Hence, there are constants $c,c_1,c_2$, where $c \gg c_1,c_2$, such that total running time of this step is bounded by:
\begin{align}\label{eq:running-bnd}
F(n) \leq 2^{c_1 \cdot \sqrt{nt} \log{n}} \left(n^{c_2} + 4n \cdot 2^{c \cdot \sqrt{tn/2} \log{(n/2)}}\right) \overset{c \gg c_1,c_2} \leq 2^{c \cdot \sqrt{tn}\log{n}},
\end{align}
and so is the total complexity of the algorithm.
\end{proof}

\subsection{Proof of \cref{thm:qptas}}
Again let us start with explaining the algorithm-specific details of the outline presented at the start of Section~\ref{sec:algo}.

We will use the notion of $\beta$-heavy vertices from~\cite{DBLP:journals/corr/abs-1907-04585, DBLP:conf/soda/ChudnovskyPPT20}.
Consider a graph $G$, a weight function $\wei: V(G) \to \Z$, and an independent set $I \subseteq V(G)$.
Let $\beta \in (0, 1/2]$ be a real. 
We say that a vertex $v \in V(G)$ is \emph{$\beta$-heavy} (with respect to $I$) if $\wei(N [v] \cap I) > \beta \cdot \wei(I)$.
A set $J$ is \emph{good for $I$} if $J \subseteq I$ and $N[J]$ contains all vertices that are $\beta$-heavy with respect to $I$.

\begin{lemma}[Chudnovsky et al.~\cite{DBLP:journals/corr/abs-1907-04585, DBLP:conf/soda/ChudnovskyPPT20}]\label{lem:heavy-vertices}
Let $G$ be an $n$-vertex graph for $n>2$, $\wei:V(G) \to \Z$ be a weight function,
$I \subseteq V(G)$ be an independent set, and $\beta \in (0, 1/2]$ be a real.
Then there exists a set $J$ of size at most $\lceil \beta^{-1} \log{n}\rceil$ which is good for $I$.
\end{lemma}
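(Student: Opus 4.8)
The plan is to prove \cref{lem:heavy-vertices} by a straightforward greedy peeling argument, repeatedly selecting a $\beta$-heavy vertex, putting it into $J$, and removing its closed neighborhood from further consideration. Each such step removes at least a $\beta$-fraction of the remaining weight of $I$, so after $\lceil \beta^{-1}\log n\rceil$ steps nothing $\beta$-heavy can remain.

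Concretely, I would proceed as follows. Initialize $J_0 \coloneqq \emptyset$ and $I_0 \coloneqq I$. Suppose after $i$ steps we have an independent set $J_i \subseteq I$ and a ``residual'' set $I_i \coloneqq I \setminus N[J_i]$. If no vertex of $G$ is $\beta$-heavy with respect to $I$ after restricting attention to $I_i$ — more precisely, if there is no $v \in V(G)$ with $\wei(N[v]\cap I_i) > \beta \cdot \wei(I)$ — we stop and output $J \coloneqq J_i$ (padded arbitrarily with elements of $I$ up to the claimed size, if desired, though this is not needed). Otherwise pick such a $v$; since $v$ has a neighbor in $I_i \subseteq I$, and $I$ is independent, $v$ cannot itself lie in $I_i$ unless it is isolated within $I_i$, but in any case $N[v]\cap I_i$ is a nonempty subset of $I$ that is \emph{disjoint} from $N[J_i]$. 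The key point is to choose which vertex of $I$ to add to $J$: add the vertex $u \in N[v]\cap I_i$ of largest weight is not quite the right move; instead note that $N[v]\cap I_i \subseteq N[u]$ for \emph{any} $u$ adjacent to $v$ with $u\in I$, or simply add $v$ itself is wrong since $v\notin I$ in general. The clean fix: pick any $u\in N[v]\cap I_i$ and set $J_{i+1}\coloneqq J_i \cup \{u\}$; then $N[J_{i+1}] \supseteq N[u] \supseteq \{v\}\cup(N(v)\cap I_i)$ when $u\in N(v)$, which swallows essentially all of $N[v]\cap I_i$ except possibly $v$ itself — and $v\notin I_i$ since $v$ has a neighbor in $I_i$ and $I$ is independent. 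Hence $I_{i+1} = I \setminus N[J_{i+1}]$ satisfies $\wei(I_{i+1}) \le \wei(I_i) - \wei(N[v]\cap I_i) < \wei(I_i) - \beta\wei(I) \le (1-\beta)^{i+1}\wei(I)$ by induction. (One must double-check the degenerate case $u=v$, i.e.\ $v\in I_i$: then $v$ is $\beta$-heavy with $v\in I$, and adding $v$ to $J$ removes $N[v]\cap I_i$ wholesale, so the bound holds trivially.)

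The counting then finishes the argument: after $k$ steps, $\wei(I_k) < (1-\beta)^k \wei(I) \le e^{-\beta k}\wei(I)$. We cannot yet invoke weights directly since weights are nonnegative integers, not reals, so to make ``$\wei(I_k)$ drops below what is needed to host a heavy vertex'' precise I would instead track the \emph{number} of residual vertices or argue that once $(1-\beta)^k < 1/n$ we get $\wei(I_k) < \wei(I)/n \le \wei(I)/|I|$, which is strictly less than the average weight and in particular strictly less than $\beta\wei(I)$ is not automatic — so the honest route is: each step strictly decreases the integer $|I_i|$, and more importantly decreases $\wei(I_i)$ by more than $\beta\wei(I)$, so there can be at most $\lfloor \wei(I)/(\beta\wei(I))\rfloor = \lfloor 1/\beta\rfloor$ steps before $\wei(I_i)$ would go negative. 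That already gives a bound of $\lceil\beta^{-1}\rceil$, and the $\log n$ factor in the statement is simply slack; but to match \cite{DBLP:journals/corr/abs-1907-04585, DBLP:conf/soda/ChudnovskyPPT20} verbatim I would use the multiplicative estimate $\wei(I_k) \le (1-\beta)^k\wei(I)$ together with the observation that if $J_k$ is still not good then $\wei(I_k) \ge$ (weight of some remaining heavy vertex's trace) $> 0$, hence $\wei(I_k) \ge 1$, forcing $(1-\beta)^k\wei(I) \ge 1$, i.e.\ $k \le \log_{1/(1-\beta)} \wei(I) \le \beta^{-1}\log\wei(I)$, and bounding $\wei(I) \le n\cdot\max_v\wei(v)$ — which does \emph{not} obviously give $\beta^{-1}\log n$.

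The main obstacle, then, is this last estimate: matching the exact bound $\lceil\beta^{-1}\log n\rceil$ rather than $\lceil\beta^{-1}\rceil$ requires being careful about what ``progress'' is measured in. The cleanest resolution — and the one I expect the authors use — is to measure progress in the number of residual vertices together with a weight argument only at the end: halve... no; rather, observe that a vertex $v$ that is $\beta$-heavy with respect to $I_i$ has $|N[v]\cap I_i|\ge 1$ and carries more than $\beta\wei(I)$ weight, and since at each step we remove such a chunk, after $t$ steps the removed weight exceeds $t\beta\wei(I)$, capped by $\wei(I)$; so $t<\beta^{-1}$, and $\beta^{-1}\le\beta^{-1}\log n$ for $n\ge 2$. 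I would present exactly this, noting $\log n \ge 1$ for $n > 2$, which is precisely why the hypothesis $n>2$ appears. Thus the ``hard part'' is really just bookkeeping and choosing the cleanest potential function; the combinatorial heart is the one-line observation that $I$ being independent forces a $\beta$-heavy vertex to have an \emph{in-}$I$ neighbor whose closed neighborhood absorbs the heavy chunk.
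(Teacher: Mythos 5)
Your peeling argument breaks at its central quantitative step, and it breaks precisely because $I$ is independent. When you add a vertex $u\in N(v)\cap I_i$ to $J$, the only vertex of $I$ that thereby enters $N[J]$ is $u$ itself: since $J\subseteq I$ and $I$ is independent, $N[u]\cap I=\{u\}$, hence $I_{i+1}=I\setminus N[J_{i+1}]=I_i\setminus\{u\}$ and $\wei(I_{i+1})=\wei(I_i)-\wei(u)$. Your claimed containment $N[u]\supseteq\{v\}\cup(N(v)\cap I_i)$ is false in general---the other $I$-neighbors of $v$ cannot be adjacent to $u$, again by independence of $I$---and with it the claimed drop $\wei(I_{i+1})\leq\wei(I_i)-\wei(N[v]\cap I_i)$. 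The removed weight $\wei(u)$ can be arbitrarily small compared with $\beta\cdot\wei(I)$ (take $v$ adjacent to many $I$-vertices of tiny weight), so neither the additive ``at most $\beta^{-1}$ rounds'' count nor the multiplicative $(1-\beta)^k$ decay holds, and the process may run for $\Omega(|I|)$ rounds. The ``combinatorial heart'' you state at the end is exactly backwards: independence of $I$ guarantees that the closed neighborhood of an in-$I$ vertex absorbs nothing of $I$ beyond that vertex. The fact that your argument appeared to give the stronger bound $\lceil\beta^{-1}\rceil$, rendering the $\log n$ ``slack,'' should have been a warning sign rather than a bonus.

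The missing idea is to measure progress in the \emph{number of heavy vertices not yet dominated}, and to pick the next element of $J$ by averaging rather than arbitrarily. Let $D$ be the set of vertices that are $\beta$-heavy with respect to $I$ and not in $N[J]$, and let $m=|D|$ (if $\wei(I)=0$ there are no heavy vertices and $J=\emptyset$ works). Summing the heaviness condition over $D$ and exchanging the order of summation gives $m\beta\,\wei(I)<\sum_{v\in D}\wei(N[v]\cap I)=\sum_{u\in I}\wei(u)\,|N[u]\cap D|$, so some $u\in I$ satisfies $|N[u]\cap D|>\beta m$; adding this $u$ to $J$ shrinks $|D|$ by a factor of at least $(1-\beta)$. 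Since $m\leq n$ initially, after at most $\beta^{-1}\ln n\leq\beta^{-1}\log n$ rounds $D$ is empty, giving $|J|\leq\lceil\beta^{-1}\log n\rceil$. This is the argument of the cited works (the present paper imports the lemma as a black box and gives no proof), and it is where the $\log n$ genuinely comes from. Incidentally, your stopping condition was fine as stated: if $v\notin N[J_i]$ then $N[v]\cap J_i=\emptyset$, so $N[v]\cap I_i=N[v]\cap I$ and $v$ is still detected; the flaw is solely in the progress measure and the choice of $u$.
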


Now the vertex is heavy if it is $\beta$-heavy for some carefully chosen parameter $\beta$.
This means that a neighborhood of a vertex is ``large'' if it contains a significant ($\geq \beta$) fraction of the weight of $I_{\mathsf{OPT}}$.
In the first phase, we exhaustively guess the set $J$ that is good for a fixed optimum solution $I_{\mathsf{OPT}}$.
Note that $J$ is of small size and since $J \subseteq I_{\mathsf{OPT}}$,
we know that $N(J)$ contains no vertices from $I_{\mathsf{OPT}}$ and thus can be safely removed from the graph.

Since $J$ is good for $I_{\mathsf{OPT}}$, we know that $G-N[J]$ contains no heavy vertices, and for this graph we call \cref{cor:forestfree}. Now, as $N[X]$ is a neighborhood of few non-heavy vertices,
we know that the total weight of $I_{\mathsf{OPT}} \cap N[X]$ is small and thus can be sacrificed, as we aim for an approximation.

\begin{proof}[Proof of \cref{thm:qptas}]
Let $s,t \geq 1$ be constants and let $(G,\wei)$ be an instance of MWIS, where $G$ is $sS_{t,t,t}$-free and has $n$ vertices. Let $\epsilon \in (0,1)$ be fixed.
Fix a maximum-weight independent set $I_{\mathsf{OPT}}$ in $G$ with respect to $\wei$.
We describe a procedure that finds in $G$ an~independent set $I$ of weight at least $(1 - \eps) \cdot \wei(I_{\mathsf{OPT}})$.

Let $N$ be the minimum power of two greater than or equal to the size of our initial instance. Note that $n \leq N < 2n$.
The value of $N$ will not change throughout the execution of the algorithm.

The algorithm itself is a recursive procedure. The arguments of each call are a graph $G'$, which is an induced subgraph of $G$, the weight function on $V(G')$ obtained by restricting the domain of $\wei$, and an integer $h$, which can be intuitively understood as the depth of the current call in the recursion tree. Since it does not lead to confusion,
we will always denote the weight function by $\wei$.
We will keep the invariant that for each call $(G',\wei,h)$ it holds that $|V(G')| \leq N/2^h$.
The initial call, corresponding to the root of the recursion tree, is for $(G,\wei,0)$. 

Consider a call for the instance $(G',\wei,h)$.
If $|V(G')| < n_0$, where $n_0$ is a constant  (depending on $s$ and $t$) that follows from the reasoning below, then we can solve the problem by brute force. Thus let us assume that $n \geq n_0$. In particular, $N > 1$.

We set 
\begin{align}\label{eq:qptas-beta}
\beta(h,\epsilon):=\frac{\epsilon}{12(t+1) \log{(N/2^h)} \cdot \left((1-\eps)\log{N} +\epsilon(h+1)\right)}.
\end{align}
It is straightforward to verify that for $h < \log N$ we have $\beta(h,\epsilon) \in (0,1/2]$.
On the other hand, if $h \geq \log N$, then $G'$ is of constant size and thus $\beta(h,\epsilon)$ is not computed for such $h$.

Let $\mathcal{J}$ be the family of all independent sets in $G'$ of size at most $\lceil\beta(h,\epsilon)^{-1}\log{(N/2^h)}\rceil$. For each $J \in \mathcal{J}$ we proceed as follows.
If $|V(G'-N[J])| < n_0$, then we compute a maximum-weight independent set $I_J$ in $G' - N[J]$ by brute force.
Otherwise, we use \cref{cor:forestfree}, to obtain a set $X_J \subseteq V(G'-N[J])$ and a rigid extended strip decomposition $(H,\eta)$ of $G'-N[J]-N[X_J]$ such that each particle of $(H, \eta)$ is of size at most $|V(G'-N[J])|/2$.
By \cref{cor:forestfree}, we obtain
\begin{align}
\begin{split}
|X_J| \leq \ & (s-1)(3t+1)+(11 \log |V(G' - N[J])|+6)(t+1) \\
\leq \ &12(t+1)\log |V(G')| \leq 12(t+1)\log (N/2^h).
\end{split}\label{eq:sizex}
\end{align}
Let $Y_J := N(J) \cup N[X_J]$.
We modify $(H,\eta)$ into an extended strip decomposition of $G'-Y_J$ as follows.
For each $v \in J$, we add to $H$ an isolated vertex $x_v$, and set $\eta(x_v)=\{v\}$.\footnote{Another possible way of dealing with the set $J$ would be to add it directly in the computed solution. However, we decided to restore $J$ to the graph, so that these vertices are handled by \cref{lem:reduction-matching} and do not require any special treatment.}
Let us call this extended strip decomposition $(H',\eta')$.
Observe that each particle of $(H',\eta')$ is of size at most $|V(G'-N[J])|/2 \leq |V(G')|/2$.
Furthermore, since $(H,\eta)$ is rigid, so is $(H',\eta')$.

For each nonempty particle $A$ of $(H',\eta')$ we call the algorithm recursively on an instance $(G'[A], \wei, h+1)$.
Let $I(A)$ be the value returned by the algorithm. For each empty particle $A$ we set $I(A) := \emptyset$.
Finally, we apply the algorithm from \cref{lem:reduction-matching}, in order to obtain an~independent set $I_J$ of $G'-Y_J$ and thus of $G'$.
Recall that the value of $\varsigma$ is not needed to apply \cref{lem:reduction-matching}; we will define it in the next paragraph when we discuss the approximation guarantee.
As the solution, we return the set $I_J$ of maximum weight, over all choices of $J \in \mathcal{J}$.

\subparagraph*{Approximation guarantee.}
Consider the recursion tree of our algorithm.
We mark some nodes of the recursion tree. First, we mark the root.
Now consider some marked node $z$ corresponding to a call $(G',\wei,h)$, such that $z$ is not a leaf node.
Observe that by \cref{lem:heavy-vertices}, there is some $J \in \mathcal{J}$ (for this particular instance)
which is good for $I_{\mathsf{OPT}} \cap V(G')$. Fix such $J$. If there is more than one, we choose one arbitrarily.
We mark the children of $z$ that correspond to the calls on the particles of the extended strip decomposition of $G' - Y_J$. 

Let $\cT$ be the subtree of the recursion tree induced by the marked nodes. 
Note that each leaf of $\cT$ is a leaf of the whole recursion tree, i.e., it corresponds to an instance of constant size.
Since at each level of the recursion, the size of the instance drops by at least half, we observe that each instance at level $h$ (where the root is at level 0) is of size at most $N/2^h$.
Consequently, the depth of $\cT$ is at most $\log N$. 

Consider a call for an instance $(G',\wei,h)$ and let $J$ be good for $I_{\mathsf{OPT}}$.  
Let us estimate $\wei(I_{\mathsf{OPT}} \cap Y_J)$.
First, observe that since $J \subseteq I_{\mathsf{OPT}}$, we have that $\wei(I_{\textsf{OPT}} \cap N(J)) = 0$.
Moreover, since $J$ was chosen to be good, there are no $\beta(h,\eps)$-heavy vertices in $V(G' - N[J])$,
and in particular, in $N[X_J]$.
Hence,
\begin{align}
\begin{split}
\wei(I_{\textsf{OPT}} \cap Y_J) = \ & \wei(I_{\textsf{OPT}} \cap N[X_J]) \leq |X_J|  \cdot \beta(h, \epsilon) \cdot \wei(I_{\textsf{OPT}} \cap V(G'))\\
\overset{\eqref{eq:qptas-beta} \text{ and } \eqref{eq:sizex}}\leq \ & \frac{\eps}{(1-\eps) \log N + \eps(h+1)}\cdot \wei(I_{\textsf{OPT}} \cap V(G')).
\end{split}\label{eq:weighty}
\end{align}

The following claim shows that the solution computed for the instance $(G',\wei,h)$ at each node of $\cT$ is a reasonable approximation of $I_{\mathsf{OPT}} \cap V(G')$.

\begin{claim}
Let $z$ be a node of $\cT$, and let $(G',\wei,h)$ be the instance corresponding to $z$.
Let $I$ be the independent set returned by the algorithm for the call at $z$.
Then $\wei(I) \geq \left(1-\epsilon +\frac{\epsilon h}{\log{N}}\right)\cdot \wei\left(I_{\mathsf{OPT}}\cap V(G')\right)$. 
\end{claim}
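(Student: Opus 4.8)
The plan is to prove the claim by induction on the structure of $\cT$, going from the leaves up to the root; concretely, I would do induction on $\log N - h$ restricted to the marked subtree $\cT$, so that the base case corresponds to leaves of $\cT$ and the inductive step processes a marked internal node using the already-established bounds for its marked children.

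\textbf{Base case.} If $z$ is a leaf of $\cT$, then (as noted just before the claim) $z$ is a leaf of the whole recursion tree, so $|V(G')| < n_0$ and the algorithm computes an exact maximum-weight independent set by brute force. Hence $\wei(I) = \wei(I_{\mathsf{OPT}} \cap V(G'))$, and since $h \le \log N$ we have $1 - \epsilon + \frac{\epsilon h}{\log N} \le 1$, so the inequality holds (in fact with room to spare).

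\textbf{Inductive step.} Suppose $z$ is a marked internal node corresponding to $(G', \wei, h)$, and let $J \in \mathcal{J}$ be the set that is good for $I_{\mathsf{OPT}} \cap V(G')$ used to mark the children of $z$; the returned set $I$ has weight at least $\wei(I_J)$ for this particular $J$, so it suffices to lower-bound $\wei(I_J)$. Write $I_{\mathsf{OPT}}' := I_{\mathsf{OPT}} \cap V(G' - Y_J)$, so that $\wei(I_{\mathsf{OPT}} \cap V(G')) = \wei(I_{\mathsf{OPT}}' ) + \wei(I_{\mathsf{OPT}} \cap Y_J)$ since $Y_J \subseteq V(G')$ and $I_{\mathsf{OPT}} \cap N(J)$ is empty. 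For each nonempty particle $A$ of $(H',\eta')$, the recursive call is on $(G'[A], \wei, h+1)$, and $A$ is a marked child of $z$; by the induction hypothesis, the set $I(A)$ returned satisfies $\wei(I(A)) \ge \left(1 - \epsilon + \frac{\epsilon(h+1)}{\log N}\right)\wei(I_{\mathsf{OPT}}' \cap A)$ — here I use that $I_{\mathsf{OPT}}'$ is an independent set of $G'-Y_J$ and that the particles, being pieces of the extended strip decomposition, partition $V(G'-Y_J)$ appropriately so that $I_{\mathsf{OPT}}' \cap A$ is independent in $G'[A]$. Setting $\varsigma := 1 - \epsilon + \frac{\epsilon(h+1)}{\log N}$ and $I_0 := I_{\mathsf{OPT}}'$ in \cref{lem:reduction-matching}, we obtain $\wei(I_J) \ge \varsigma \cdot \wei(I_{\mathsf{OPT}}')$. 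It now remains to combine this with the bound \eqref{eq:weighty} on $\wei(I_{\mathsf{OPT}} \cap Y_J)$: we have
\[
\wei(I_J) \ge \left(1 - \epsilon + \frac{\epsilon(h+1)}{\log N}\right)\left(\wei(I_{\mathsf{OPT}} \cap V(G')) - \wei(I_{\mathsf{OPT}} \cap Y_J)\right),
\]
and plugging in $\wei(I_{\mathsf{OPT}} \cap Y_J) \le \frac{\epsilon}{(1-\epsilon)\log N + \epsilon(h+1)}\wei(I_{\mathsf{OPT}} \cap V(G'))$, the product telescopes: the factor $1 - \epsilon + \frac{\epsilon(h+1)}{\log N} = \frac{(1-\epsilon)\log N + \epsilon(h+1)}{\log N}$ cancels against the denominator of the error term, leaving a loss of exactly $\frac{\epsilon}{\log N}\wei(I_{\mathsf{OPT}} \cap V(G'))$, so $\wei(I_J) \ge \left(1 - \epsilon + \frac{\epsilon(h+1)}{\log N} - \frac{\epsilon}{\log N}\right)\wei(I_{\mathsf{OPT}} \cap V(G')) = \left(1 - \epsilon + \frac{\epsilon h}{\log N}\right)\wei(I_{\mathsf{OPT}} \cap V(G'))$, as desired.

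\textbf{Main obstacle.} The arithmetic cancellation is the whole point of the peculiar definition \eqref{eq:qptas-beta} of $\beta(h,\epsilon)$, so the delicate part is making sure the algebra lines up exactly — in particular that the $|X_J| \le 12(t+1)\log(N/2^h)$ bound from \eqref{eq:sizex} and the $12(t+1)\log(N/2^h)$ factor in the denominator of $\beta$ match, which is why \eqref{eq:weighty} comes out clean. A secondary point that needs a careful word is the application of the induction hypothesis to the particles: one must check that $A$ is genuinely a marked node of $\cT$ and that the invariant $|V(G'[A])| \le N/2^{h+1}$ holds, which follows since each particle has size at most $|V(G')|/2 \le N/2^{h+1}$; and one must confirm that restricting $I_{\mathsf{OPT}}'$ to a particle yields an independent set of $G'[A]$, which is immediate as $G'[A]$ is an induced subgraph. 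Everything else is bookkeeping.
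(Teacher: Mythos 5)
Your proposal is correct and follows essentially the same route as the paper's own proof: induction from the leaves of $\cT$ upward, applying the inductive hypothesis to the particle-level recursive calls, invoking \cref{lem:reduction-matching} with $\varsigma = 1-\epsilon+\frac{\epsilon(h+1)}{\log N}$, and then exploiting the exact cancellation built into the definition of $\beta(h,\epsilon)$. Your instantiation of the lemma with $I_0 = I_{\mathsf{OPT}} \cap V(G'-Y_J)$ is in fact a slightly more careful reading than the paper's shorthand ``applied to $I_{\mathsf{OPT}}$,'' but the arguments are otherwise identical.
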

\begin{claimproof}
First, observe that if $z$ is a leaf of $\cT$, then the statement of the claim is satisfied.
Indeed, in this case $I$ is computed by brute force, and hence $\wei(I)=\wei(I_{\mathsf{OPT}} \cap V(G'))$.

Recall that the algorithm returns the solution of maximum weight among all choices of $J \in \mathcal{J}$,
so clearly we have $\wei(I) \geq \wei(I_J)$, where $J$ is good for $I_{\mathsf{OPT}} \cap V(G')$.

We proceed by induction on $h$. First, consider a node $z$ at the level $h=\log{N}$.
As the depth of $\cT$ is at most $\log N$, we observe that $z$ must be a leaf, so the claim follows by the observation above.

Assume that the claim holds for $h+1 \in [\log{N}]$ and consider a node $z$ at level $h$.
If $z$ is a leaf, then again, we are done.
Otherwise, let $\cA$ be the set of nonempty particles of the extended strip decomposition of $G'-Y_J$.
For every such particle $A$, we recursively computed an independent set $I(A)$.
By the inductive assumption, we have that $\wei(I(A)) \geq \left(1 - \epsilon + \frac{\epsilon(h+1)}{\log{N}}\right)\wei(I_{\textsf{OPT}} \cap V(G'[A]))$; note that these recursive calls are at level $h+1$.
Clearly, the same holds for empty particles because $\emptyset$ is there an optimum solution.

Thus, by \cref{lem:reduction-matching} applied to $I_{\mathsf{OPT}}$ and $\varsigma = 1 - \epsilon + \frac{\epsilon(h+1)}{\log{N}}$, we obtain an independent set $I_J$ in $G' - Y_J$, such that
\begin{align}
\begin{split}
\wei(I_J) \geq \ & \left(1 - \epsilon + \frac{\epsilon(h+1)}{\log{N}}\right) \wei(I_{\textsf{OPT}} \cap V(G'-Y_J))\\
= \ & \left(1 - \epsilon + \frac{\epsilon(h+1)}{\log{N}}\right) \biggl( \wei(I_{\textsf{OPT}} \cap V(G')) - \wei(I_{\textsf{OPT}} \cap Y_J)\biggr).
\end{split}\label{eq:qptas-first}
\end{align}

Combining \eqref{eq:qptas-first} with \eqref{eq:weighty} and simplifying the formula, we obtain
\begin{align*}
\wei(I_J) \geq \left(1 - \epsilon + \frac{\epsilon h}{\log{N}}\right) \wei(I_{\textsf{OPT}} \cap V(G')), 
\end{align*}
which concludes the proof of the claim.
\end{claimproof}

Since the root of the recursion tree belongs to $\cT$, the final result $I$ returned for the call at the root (i.e., for $(G,\wei,0)$) satisfies
\[
\wei(I) \geq \left(1-\epsilon\right)\cdot \wei\left(I_{\mathsf{OPT}}\cap V(G)\right) = (1-\eps) \cdot \wei(I_{\mathsf{OPT}}). 
\]
This concludes the discussion of the approximation guarantee.

\subparagraph*{Running time.}
Recall that the recursion tree has depth at most $\log N$.
Let us show the following claim concerning the running time.

\lv{\begin{restatable}{claim}{qptastime}\label{clm:qptastime}}
\sv{\begin{restatable}[$\spadesuit$]{claim}{qptastime}\label{clm:qptastime}}
Let $z$ be a node of the recursion tree, and let $(G',\wei,h)$ be the instance corresponding to $z$.
Then the algorithm solves this instance in time $2^{\Oh\left( \eps^{-1} \log^4 N \log \left(N/2^{h-1}\right)\right)}$.
\end{restatable}

\toappendix{
    \sv{\qptastime*}
    \begin{claimproof}
Let $F(h)$ denote the upper bound for the running time of our algorithm, depending on the level of the call in the recursion tree.
We aim to show that there is an absolute constant $c$, such that for $N$ sufficiently large we have 
\[F(h) \leq 2^{c \cdot \eps^{-1} \log^4 N \log \left(N/2^{h-1}\right)}.\]

Recall that $|V(G')| \leq N/2^h$.
If $z$ is a leaf, then the instance is of constant size, and thus the claim holds (assuming that $c$ is sufficiently large).
In particular this happens if $h = \log N$. So let us assume that the claim holds for the calls at level $h+1$ and that $h < \log N$.

Recall that we first enumerate the family $\mathcal{J}$ of all independent sets of size at most $\lceil \beta(h,\epsilon)^{-1} \log (N/2^h) \rceil$. Observe that
\begin{align*}
|\mathcal{J}| \leq \ & |V(G')|^{\lceil \beta(h,\epsilon)^{-1} \log (N/2^h) \rceil} \leq 2^{\log(N/2^h) \lceil \beta(h,\epsilon)^{-1} \log (N/2^h) \rceil},
\end{align*}
and the family $\mathcal{J}$ can be enumerated in time polynomial in its size.

For each $J \in \mathcal{J}$, using \cref{cor:forestfree} and modifying its outcome,
in polynomial time we obtain a set $X_J$ and a rigid extended strip decomposition $(H',\eta')$ of $G-Y_J$, where $Y_J = N[X_J] \cup N(J)$.

Next, we call the algorithm recursively for at most $4\cdot |V(G')| \leq 4 \cdot N/2^h$ instances, each at depth $h+1$.
Finally, use use \cref{lem:reduction-matching} to obtain our solution in time polynomial in $|V(G')|$ and thus in $N/2^h$.

Thus the running time is bounded by the following expression (here $c_1,c_2,c_3$ are absolute constants, such that $c_1$ and $c_2$ are much smaller than $c_3$, and $c_3 = c/12(t+1)$): 
\begin{align*}
F(h) \leq \ &  2^{c_1 \cdot \beta(h,\epsilon)^{-1} \log^2 (N/2^h) } \cdot \left ((N/2^h)^{c_2}  + 4 \cdot (N/2^h) \cdot F(h+1) \right) \\
\overset{c_3 \gg c_1,c_2 }\leq \ & 2^{ c_3 \cdot \beta(h,\epsilon)^{-1} \log^2 (N/2^h) } \cdot 2^{c \cdot \eps^{-1} \log^4 N \log (N/2^{h}) }\\
= \ & \exp \ \Bigl\{ c_3 \cdot \beta(h,\epsilon)^{-1} \log^2 (N/2^h)  +  c \cdot \eps^{-1} \log^4 N \log (N/2^{h})  \Bigr\} \\
\leq \ & \exp \ \Bigl\{ c_3 \cdot 12(t+1)  \cdot \left(\frac{1-\epsilon}{\epsilon} \log N + (h+1) \right)\log^3 (N/2^h)  + c \cdot \eps^{-1} \log^4 N \log (N/2^{h})  \Bigr\} \\
\overset{h < \log N} \leq \ & \exp \ \Bigl\{ c \cdot \epsilon^{-1} \log^4 N  + c \cdot \eps^{-1} \log^4 N \log (N/2^{h})  \Bigr\} \\
 = \ & \exp \ \Bigl\{ c \cdot \eps^{-1} \log^4 N (\log (N/2^{h})+1)  \Bigr\}
 = \exp \ \Bigl\{ c \cdot \eps^{-1} \log^4 N \log (N/2^{h-1})  \Bigr\}.
\end{align*}
This completes the proof of the claim.
\end{claimproof}
}
Now we apply \cref{clm:qptastime} to the initial call $(G,\wei,0)$ and obtain  that the overall running time is
\[
2^{\Oh(\epsilon^{-1} \log^5 N  )} = 2^{\Oh( \epsilon^{-1} \log^5 n )},
\]
as $N <2n$. This completes the proof.
\end{proof}

\section{Conclusion}\label{sec:conclusion}
In the QPTAS of Chudnovsky, Pilipczuk, Pilipczuk, and Thomass\'{e}~\cite{DBLP:journals/corr/abs-1907-04585, DBLP:conf/soda/ChudnovskyPPT20}
it was more convenient to measure the weight of parts of the graph
not by the number of vertices, but by the weight of the intersection of the sought solution 
with the part in question. 
We observe that we can adapt Theorem~\ref{thm:main} to this setting
of unknown weight function.
\begin{theorem}\label{thm:main:weight}
Given an $n$-vertex graph $G$ and an integer $t$, one can in time $n^{\Oh(t \log n)}$ 
either:
\begin{itemize}
\item output an induced copy of $S_{t,t,t}$ in $G$, or
\item output a family $\mathcal{F}$ satisfying the following:
\begin{enumerate}
\item every element of $\mathcal{F}$ is a pair of a 
set $\mathcal{P}$ consisting of at most $11 \log n + 6$ induced paths in $G$,
  each of length at most $t+1$, and an extended strip decomposition of
  $G - N[\elems{\Pp}]$;
\item for every weight function $\wei : V(G) \to \Z$ there exists
a pair in $\mathcal{F}$ such that every particle in the extended strip decomposition
of the pair has weight at most half of the total weight of $G$;
\item the size of $\mathcal{F}$ is bounded by $n^{\Oh(\log n)}$. 
\end{enumerate}
\end{itemize}
\end{theorem}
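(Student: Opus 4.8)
The plan is to re-run the algorithm of \cref{lem:recursion} (and hence the proof of \cref{thm:main}), but replacing every step where the algorithm \emph{chooses one branch based on vertex counts} by a step that \emph{branches over all possibilities}. In the proof of \cref{thm:main} there are exactly two places where the number of vertices is used to make a decision: (i) the initial call to \cref{thm:gyarfas}, which navigates toward the largest component of $G - N[Q]$; and (ii) inside \cref{lem:recursion}, the selection of the non-small particle $A$ together with the (at most two) paths of $\Sst$ touching it on which we recurse. For a fixed but unknown weight function $\wei$, the ``correct'' choices are the weight-heaviest component and a particle of weight more than half the total. Since we do not know $\wei$, we branch over all candidate choices and collect all resulting pairs $(\Pp, \text{\esd})$ into the family $\mathcal{F}$; property~(ii) of the statement then holds because, for any given $\wei$, the branch that always picks the heaviest object produces a pair with all particles of weight $\leq \tfrac12\wei(V(G))$, by the same multiplicative-decrease argument as in the original proof.

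First I would restate the Gyárfás-path tool in a weight-oblivious form: given $G$, in polynomial time one finds an induced path $Q$ of length $\leq t+1$ (or an induced $S_{t,t,t}$) such that \emph{every} component of $G - N[Q]$ has at most $n/2$ vertices --- this is exactly \cref{thm:gyarfas} and needs no change, since the vertex-count guarantee there already holds for all components simultaneously, hence in particular for the weight-heaviest one regardless of $\wei$. So the only genuine branching happens at step (ii) inside the recursion. Concretely, I would modify \cref{lem:recursion} so that, upon obtaining a (possibly non-refined) \esd $(H',\eta')$ of $G - \shell(\Ss)$ from \cref{thm:three-in-a-tree}, instead of selecting one non-small particle we iterate over \emph{all} nonempty particles $A$ of $(H',\eta')$ that are ``large enough to matter'': for each such $A$ we form $\hat{\Qq}$ as the set of (at most two, by \cref{thm:meet}) paths of $\Sst$ touching $A$, check $|\elems{\hat{\Qq}}| \leq \tfrac23 |\elems{\Qq}|$ (this holds for the same arithmetic reason as before — it depends only on the path partition, not on $\wei$), recurse on $\hat{G} = A \cup P_1 \cup P_2$, and assemble the output pair exactly as in the original proof. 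We also keep, as one further branch, the ``base case'' outcome $\Pp := \trim(\Ss)$ with the \esd $(H',\eta')$ itself (used when that \esd is already light for $\wei$). The returned family is the union of all pairs over all branches.

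For the size bound~(iii): by \cref{obs:particle-bound} each recursive node branches into $\Oh(n)$ particles, and the recursion depth is $\Oh(\log n)$ (the quantity $|\elems{\Qq}|$ drops by a factor $\tfrac23$ at each level, as before), so the recursion tree — and hence $|\mathcal{F}|$ — is bounded by $n^{\Oh(\log n)}$; the running time is $|\mathcal{F}|$ times the polynomial cost per node, i.e.\ $n^{\Oh(t\log n)}$ (the extra $t$ appears, as in \cref{thm:main}, only in bookkeeping the $\Oh(\log n)$ paths of length $t+1$). Property~(i) of the family (each pair consists of $\leq 11\log n + 6$ paths of length $\leq t+1$ and an \esd of $G - N[\elems{\Pp}]$) is inherited verbatim from the per-branch analysis of \cref{lem:recursion}, since the counting of how many paths enter $\Pp$ per level ($\leq 6$) is unchanged. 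The main obstacle — and the only point requiring real care — is verifying property~(ii): one must check that for a \emph{fixed} $\wei$, the branch that consistently picks a particle of weight $> \tfrac12 \wei(V(G))$ at every recursive step terminates with a refined-for-$\wei$ \esd. This follows because such a heavy particle, if it exists, is unique; if at some node \emph{no} particle has weight $> \tfrac12\wei(\cdot)$, then the \esd $(H',\eta')$ at that node is already light for $\wei$ and the corresponding base-case branch gives the desired pair; and the $\tfrac23$-decrease of $|\elems{\Qq}|$ guarantees we reach a genuine base case within $\Oh(\log n)$ steps, exactly as in the unweighted proof. The subtlety is that "light for $\wei$" must be checked for the particles of the \emph{final} \esd (which includes the trivial vertex particle $\eta(w)$ absorbing $G - X - A$ at each merge step), so one argues inductively that every particle ever added along the chosen branch has $\wei$-weight at most $\tfrac12\wei(V(G))$, using that $\hat{G} \setminus \hat{X}$'s particles are light by induction and that the newly created $\eta(w)$ is a subset of $G - X - A$, whose weight is $\wei(V(G)) - \wei(A) < \tfrac12\wei(V(G))$ precisely when $A$ was the heavy particle chosen.
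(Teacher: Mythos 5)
There is a genuine gap, and it is exactly at the point you wave through: the initial Gy\'arf\'as path. You claim \cref{thm:gyarfas} ``needs no change, since the vertex-count guarantee holds for all components simultaneously,'' but the statement you must prove is about \emph{weight}, and a component of $G-N[V(Q)]$ with at most $n/2$ vertices can still carry essentially all of $\wei(V(G))$. This matters because in \cref{lem:recursion} the base case (and the inductive maintenance of the ``refined \esd'' hypothesis) returns a restriction of the \emph{initial} decomposition of $G-N[V(Q)]$ into trivial vertex particles; your branching over particles only controls the newly created trivial particles $\eta(w)=G-X-A$ and the ``no heavy particle'' exits, while the particles inherited from the initial path are never re-examined. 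Your own inductive argument for property~(ii) silently assumes the base case is weight-light, which is precisely what is not guaranteed. A concrete failure: let all weight sit on a single vertex $v$ lying outside $N[V(Q)]$ for the vertex-count path $Q$ (e.g.\ $v$ isolated, or deep inside the unique large component, so the path computed by \cref{thm:gyarfas} avoids $N[v]$). Then in every branch of your algorithm $v$ survives into some particle of the output pair --- no shell, no set $X'=(N(v_x)\cup N(v_y))\setminus A$, and no path of $\Pp$ is ever forced to absorb $v$ --- so every pair in $\mathcal{F}$ has a particle of weight equal to the full weight, and property~(ii) fails.

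The paper's proof fixes exactly this: besides guessing, at each recursive step, which particle is heavy (or that none is), it also \emph{enumerates the initial path}. One can list at most $n^2$ induced paths such that for every weight function $\wei$ at least one of them is a Gy\'arf\'as path \emph{for $\wei$}, i.e.\ every component of $G-N[V(Q)]$ has weight at most half the total; the family $\mathcal{F}$ is built over all these candidates as well, which is where part of the $n^{\Oh(\log n)}$ budget (and the polynomial per-level blow-up) goes. In the example above, this enumeration contains a path with $v\in N[V(Q)]$, which your scheme lacks. The rest of your plan --- branching over all particles of $(H',\eta')$, keeping the ``no heavy particle'' branch, the uniqueness of a particle of weight more than half, the $\tfrac23$-decrease of $|\elems{\Qq}|$, and the $n^{\Oh(\log n)}$ counting --- matches the paper's argument; to repair the proposal you must add the weight-oblivious enumeration of candidate Gy\'arf\'as paths (and restate ``refined'' throughout the recursion with respect to the guessed weight-halving decomposition rather than the vertex-count one).
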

\begin{proof}[Proof sketch.]
As observed in~\cite{DBLP:journals/corr/abs-1907-04585, DBLP:conf/soda/ChudnovskyPPT20}, in $G$
one can identify at most $n^2$ induced paths such that for every weight function $\wei : V(G) \to \Z$, at least one of the identified path is a~Gy\'{a}rf\'{a}s' path for $\wei$, 
that is, a path $Q$ such that every connected component of $G-N[Q]$ is of weight at most
half of the weight of $G$. 
Thus, we can guess the path $Q$ as in the proof in \cref{thm:main} out of at most 
$n^2$ candidates.

Then, in the recursive step in the proof of \cref{thm:main}, instead of choosing
the heavy particle to recurse on, we guess which particle is heavy (or that none exists). 
It is easy to see that any extended strip decomposition in the process will have fewer than $n$
inclusion-wise maximal particles; thus, this gives $n^{\Oh(\log n)}$ possible
outputs to enumerate.
\end{proof}

We think the $\log n$ factor in Theorem~\ref{thm:main} is an artifact of our technique,
   and is not necessary. Therefore, we pose the following conjecture.
\begin{conjecture}\label{conjecture}
For every integer $t \geq 1$ there exists a constant $\varepsilon > 0$
and an integer $s$ such that
every $S_{t,t,t}$-free graph $G$ admits a set $P \subseteq V(G)$ of size at most $s$
such that $G-N[P]$ admits a rigid extended strip decomposition whose every particle
has at most $(1-\varepsilon)|V(G)|$ vertices. 
\end{conjecture}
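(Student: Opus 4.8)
The $\Oh(t\log n)$ bound in \cref{thm:main} (and in \cref{cor:forestfree}) comes from exactly one place: the Gy\'arf\'as path $Q$ returned by \cref{thm:gyarfas} may have up to $n$ vertices, and \cref{lem:recursion} binary-searches into it, spending $\Oh(1)$ fresh paths of length $\le t+1$ --- i.e.\ $\Oh(t)$ fresh vertices of $P$ --- each time $|\bigcup\mathcal{Q}|$ drops by a constant factor, for $\Oh(\log n)$ such descents. The plan is therefore to never carry a long path across recursive levels: at each descent one keeps only the two attachment vertices $v_x\in\eta'(xy,x)$ and $v_y\in\eta'(xy,y)$ inherited from the previous call of \cref{thm:three-in-a-tree}, and, inside the current big particle $A$, recomputes a \emph{fresh} Gy\'arf\'as path of $G[A]$. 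With the base case ``$Q$ has $\Oh_{t,\varepsilon}(1)$ vertices, so output the trivial rigid extended strip decomposition of $G-N[Q]$'' (whose particles, being components of $G-N[Q]$, have at most $n/2\le(1-\varepsilon)n$ vertices), each descent adds only $\Oh(t)$ vertices to $P$, so it suffices to bound the number of descents by a function of $t$ and $\varepsilon$ alone.

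To make one descent ``count'', I would relax the target to being $(1-\varepsilon)$-refined (every particle has at most $(1-\varepsilon)n$ vertices, with $n=|V(G)|$ fixed throughout) and, before invoking \cref{thm:three-in-a-tree}, cut the current Gy\'arf\'as path not into three but into $k=\Theta(\varepsilon^{-1})$ pairwise non-touching subpaths $Q^1,\dots,Q^k$ of roughly equal length, separated by single cut vertices; if the path has fewer than $t/\varepsilon$ vertices it already falls under the (enlarged) base case, so we may assume each $Q^i$ has length at least $t$. As in \cref{lem:recursion} one trims each $Q^i$ to $\Oh(t)$ vertices, carves away the shells $N[\trim(Q^i)]$, and feeds the first vertices of $Q^1,\dots,Q^k$ as the terminal set $Z$. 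Either \cref{thm:three-in-a-tree} exposes an induced $S_{t,t,t}$, or it returns a rigid extended strip decomposition in which each $Q^i$ retains a peripheral endpoint; by \cref{thm:meet} any single particle touches at most two of the pairwise non-touching paths $Q^1,\dots,Q^k$, hence a particle of size exceeding $(1-\varepsilon)n$ must be vertex-disjoint from at least $k-2$ of the $Q^i$, and therefore contains at most an $\Oh(1/k)$-fraction of the vertices of $Q$. Provided the chosen Gy\'arf\'as path has at least $\delta n$ vertices for a constant $\delta=\delta(t,\varepsilon)>0$, this contradicts the size of the particle, so a single descent already produces a $(1-\varepsilon)$-refined rigid extended strip decomposition after adding $\Oh_{t,\varepsilon}(1)$ vertices to $P$.

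The main obstacle is exactly that proviso. \cref{thm:gyarfas} gives no lower bound on $|Q|$, and an $S_{t,t,t}$-free graph may have Gy\'arf\'as paths that are much longer than the base-case threshold yet still sublinear --- e.g.\ of length $\Theta(\log n)$ --- in which case the argument above peels off only $o(n)$ vertices per descent and the naive accounting needs $\omega_t(1)$ of them. Closing this gap seems to require one of two things: either proving that a graph all of whose relevant Gy\'arf\'as paths are simultaneously super-constant and sublinear is already forced to be sufficiently ``line-graph-like'' (in the sense of \cref{sec:intro}) that a balanced rigid extended strip decomposition can be read off directly, without recursion; or designing a charging argument in which each descent is debited against a fixed $\Omega_{t,\varepsilon}(1)$-fraction of $V(G)$ that is removed for good --- for instance by also carving away, at every level, the long tails of the $Q^i$ that the big particle provably avoids, and arguing that after $\Oh_t(\varepsilon^{-1})$ levels either these removed tails already amount to an $\varepsilon$-fraction of the graph or some deeper structural obstruction surfaces. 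Reconciling ``enough spread-out terminals to force an $S_{t,t,t}$ or balance'' with ``few enough carved vertices to preserve balance'' is, in my view, the combinatorial heart of \cref{conjecture}; the potential-function technique of Gartland--Lokshtanov and Pilipczuk--Pilipczuk--Rz\k{a}\.{z}ewski does not obviously help, as it is calibrated to yield quasipolynomial \emph{algorithms} rather than a separator of size independent of $n$.
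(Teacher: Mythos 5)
You have not proved this statement, and neither does the paper: \cref{conjecture} is posed as an open conjecture precisely because the $\log n$ factor in \cref{thm:main} is not known to be removable, so there is no proof in the paper to compare yours against. Your proposal is best read as a strengthening of a single recursion step of \cref{lem:recursion}: cutting the Gy\'arf\'as path into $k=\Theta(\varepsilon^{-1})$ non-touching pieces, building shells as in the paper, and invoking \cref{thm:three-in-a-tree} and \cref{thm:meet} with $k$ peripheral terminals does show, with only $\Oh(t/\varepsilon)$ vertices added to $P$, that either an induced $S_{t,t,t}$ appears or no particle can exceed $(1-\varepsilon)n$ \emph{provided} the path contains a constant fraction of $V(G)$. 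That conditional statement is sound, but it only settles the conjecture for graphs admitting a linear-length Gy\'arf\'as path.

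The genuine gap is the one you name yourself: \cref{thm:gyarfas} gives no lower bound on $|Q|$, and in the regime where every relevant Gy\'arf\'as path is $\omega(1)$ but $o(n)$ (say $\Theta(\log n)$) your single descent neither reaches the constant-size base case nor yields the counting contradiction, since the $k-2$ avoided pieces account for only $o(n)$ vertices. Iterating descents, as in \cref{lem:recursion}, reintroduces the very dependence on $n$ you are trying to eliminate: each descent is only guaranteed to shrink the carried path by a constant factor, so without a charging scheme that debits each descent against an $\Omega_{t,\varepsilon}(1)$-fraction of $V(G)$ removed permanently, the number of descents — and hence $|P|$ — is again $\Theta(\log n)$. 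The two escape routes you sketch (showing that the intermediate regime forces a directly readable balanced extended strip decomposition, or designing such a charging argument) are exactly the missing content of the conjecture, and neither is carried out; so the proposal identifies the combinatorial difficulty correctly but does not close it.
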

Abrishami, Chudnovsky, Dibek, and Rz\k{a}\.{z}ewski~\cite{ACDR21} very recently announced
a polynomial-time algorithm for \textsc{MWIS} in $S_{t,t,t}$-free graphs of bounded degree.
Their argument is quite involved and revisits the proof of the three-in-a-tree theorem~\cite{DBLP:journals/combinatorica/ChudnovskyS10}.

Confirming Conjecture~\ref{conjecture} would imply the same result almost immediately, possibly with a better running time.
Indeed, one needs to branch on $N[P]$ and recurse on the remainder of every particle of $(H,\eta)$.
The maximum degree of $H$ is bounded by a function of the maximum degree of $G$  (i.e., is a constant),
which ensures that the sum of sizes of all particles is linear in $|V(G)|$.
This in turns implies that the total complexity of the algorithm can be bounded by a polynomial function.
Note that the same approach using Theorem~\ref{thm:main} yields quasipolynomial running time bound. 

We see Theorem~\ref{thm:main} as the analog of Theorem~\ref{thm:gyarfas}
in the classes of $S_{t,t,t}$-free graphs: with its help, 
obtaining a QPTAS or a subexponential algorithm was relatively simple, 
following the ideas of~\cite{DBLP:journals/algorithmica/BacsoLMPTL19,DBLP:journals/corr/abs-1907-04585,DBLP:conf/soda/ChudnovskyPPT20}. We expect it is a first step to get a quasipolynomial-time algorithm for \textsc{MWIS}
in $S_{t,t,t}$-free graphs, similarly as Theorem~\ref{thm:gyarfas} 
is an essential ingredient of the algorithms for $P_t$-free graphs~\cite{GartlandL20,PilipczukPR21}. 
However, there is a lot of work to be done:
the way how~\cite{GartlandL20,PilipczukPR21} measure the progress of the branching
algorithm is quite intricate; furthermore, 
for the class of $C_{>t}$-free graphs (graphs excluding all cycles of length more than $t$
    as induced subgraphs, a~proper superclass of $P_t$-free graphs)
while an analog of Theorem~\ref{thm:gyarfas} is known, the corresponding 
measure of the progress of the branching algorithm is much more involved~\cite{GartlandLPPR21}.

\bibliography{main}

\sv{
  \newpage
\appendix
\section{Appendix}
\appendixText
}

\end{document}